\theoremstyle{plain}
\newtheorem{theorem}{Theorem}[section]
\newtheorem{lemma}[theorem]{Lemma}
\newtheorem{corollary}[theorem]{Corollary}
\newtheorem{proposition}[theorem]{Proposition}
\theoremstyle{definition}
\newtheorem{definition}{Definition}[section]
\theoremstyle{remark}
\newtheorem*{remark}{Remark}
\newcommand{\ud}{\mathrm{d}}
\newcommand{\I}{\mathrm{i}}
\DeclareMathOperator*{\diag}{diag}
\begin{document}

\title{Polynomial Approximation of Discounted Moments
}


\author{Chenyu Zhao         \and
Misha van Beek \and \phantom{bla}
Peter Spreij \and
Makhtar Ba
}



\date{October 29, 2021}

\maketitle

\begin{abstract}
We introduce an approximation strategy for the discounted moments of a stochastic process that can, for a large class of problems, approximate the true moments. These moments appear in pricing formulas of financial products such as bonds and credit derivatives. The approximation relies on high-order power series expansion of the infinitesimal generator, and draws parallels with the theory of polynomial processes. We demonstrate applications to bond pricing and credit derivatives. In the special  cases that allow for an analytical solution the approximation error decreases to around 10 to 100 times machine precision for higher orders. When no analytical solution exists we tie out the approximation with Monte Carlo simulations.
\end{abstract}

\section{Introduction}
\label{intro}

For pricing and hedging applications, the interest is often in calculating the expected value of a discounted function of a stochastic process,
\begin{align*}
    \mathbb{E}\left[e^{-\int_0^t{r(X_s)\ud s}}f(X_t)\middle|X_0=x\right],
\end{align*}
where $f$ describes the contingent claim and $r$ is the risk-free rate. Sometimes another rate may be used for discounting, such as a hazard rate.

In several cases, this expectation has enough structure to allow for analytical or semi-analytical solutions. For example, if the process $X_t$ is an affine process and $r$ is an affine function, then the Fourier transform of $f$ can be used to compute the expectation up to an integral and the solution to a system of Riccati equations \citep{duffie2003affine}. Also, if $X_t$ is a polynomial process as defined by \citet{cuchiero2012polynomial} and $f$ is a polynomial function, and there is no discounting, then a simple analytical expression exists.

This paper introduces an approximation formula that may work in situations where no analytical expression can be found. The functional form of the approximation of order $k$ is 
\begin{align}\label{eq:approx}
    \mathbb{E}\left[e^{-\int_0^t{r(X_s)\ud s}}\langle\bar{f}^k,b^k(X_t)\rangle\middle|X_0=x\right]
    \approx\langle e^{tA_k}\bar{f}^k,b^k(x)\rangle,
\end{align}
where $b^k(x)$ is a vector of certain basis functions (for now we take it $(1,x,x^2,\ldots,x^{k-1})^\top$ for a univariate process on $\mathbb{R}$, but multivariate cases will be considered), and $\bar{f}^k=(f_0,\ldots,f_{k-1})$ is a vector of length $k$ such that the inner product $\langle\bar{f}^k,b^k(X_t)\rangle=\sum_{i=0}^{k-1}f_ib_i(x)$ represents a (polynomial) expression of the contingent claim. The matrix $A_k$ can be derived from the infinitesimal generator of the process and the function $r$. Naturally when we are interested in the $i$-th discounted moment we can choose a basis vector $\bar{f}^k=e_{i}$, $i=0,\ldots,k-1$. Here $e_{i}$ is the vector of length $k$ that has 1 as the entry at the $i$-th position, all other entries being zero. Note that the numbering starts with $i=0$, which corresponds to the monomials $x^i$, also starting with $i=0$. Other choices for the basis functions are equally well conceivable and we will return later to this.

We investigate two primary applications of this approximation. The first is in the calculation of bond prices in short rate models. As the order increases the approximation approaches machine precision, or falls within the margins of a Monte Carlo price when the true bond price has no closed-form expression. This is shown for Cox-Ingersoll-Ross (CIR) \citep{cox2005theory} and Black-Karasinksi \citep{black1991bond} bond prices. Figure \ref{fig:rates1d} previews several orders of magnitude in performance gain over existing numerical techniques. This comparison was made on a simple CIR bond price to illustrate the convergence to the known analytical solution.

\begin{figure}
    \centering
    \includegraphics[width=8cm]{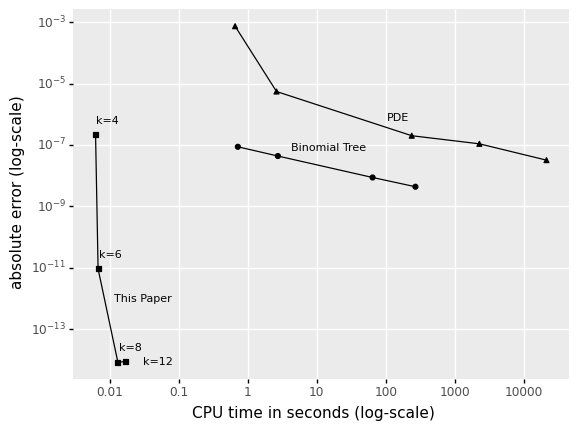}
    \caption{Comparison of polynomial approximation against standard numerical methods for a CIR bond price.}
    \label{fig:rates1d}
\end{figure}

The second application is the calculation of spreads in a generalized Markovian model of credit rating migrations that we develop in an accompanying paper \citep{ba2020integrated}. This model follows the setup of \citet{lando1998cox} and assumes that companies migrate within a set of $m$ ratings, e.g.~$\{\mathrm{AAA},\mathrm{AA},\ldots,\mathrm{CCC},\mathrm{D}\}$, according to a Markov chain $R_t$. The Markov chain has an $m\times m$ generator matrix $Q(Y_t)$ that depends on a state $Y_t$. \citet{jarrow1997markov} derive an analytical solution to spread curves when $Q$ is constant and does not depend on the state $Y_t$. \citet{lando1998cox} and \citet{arvanitis1999building} provide pricing equations for the situation that the generators $Q(Y_t)$ commute (i.e.\ $Q(y_1)Q(y_2)=Q(y_2)Q(y_1)$ for any values $y_1$ and $y_2$ that the process might take), and $Y_t$ follows an affine process. \citet{hurd2007affine} apply these equations to the case that $Q(y)=y_1Q_1+y_2Q_2$ with $Y_t$ bivariate CIR process and $Q_1$ and $Q_2$ are two commuting generator matrices.

The commutativity requirement is highly restrictive, as pointed out by \citet{martin2020credit}. There is strong empirical evidence that upgrades tend to slow down when downgrades speed up, suggesting that the upper and lower triangle of $Q(y)$ are driven separately by two negatively correlated processes. Upper and lower triangular matrices do not commute. We can use our approximation to relax the commutativity requirement as well as the CIR requirement. This relaxation allows us to cover several more stylized facts about credit migrations and spreads, as argued in our accompanying paper \citep{ba2020integrated}. In the exceptional special  cases where an analytical solution exists, our approximation method approaches machine precision as the order increases.

The approximation we propose is easy to compute. It requires application of the infinitesimal generator $\mathcal{A}$ to the terms in a polynomial basis. In the canonical univariate case this means computing $\mathcal{A}b_0,\mathcal{A}b_1,\mathcal{A}b_2,\ldots$ for certain basis functions $b_i$, and subsequently projecting the results on the same basis. Whereas this computation can be done by hand in many cases, in general this is straightforward for a symbolic software package. This is shown in the web appendix to this paper.

Once the correct form of the matrix $A_k$ is found in this way, the computation of moments is very fast. \citet{al2011computing} offer a very efficient and numerically accurate algorithm for calculation the action of the matrix exponential $e^{tA_k}\bar{f}^k$ for a series of times $t$. Subsequent computing of the approximation for a given state $x$ can also be very efficient. In the univariate case, per Horner's method this takes $k$ additions and $k$ multiplications, where we found that usually an order or $k=20$ is sufficient. This is especially convenient for empirical methods such as maximum likelihood estimation, Kalman filtering or MCMC methods, where we typically distinguish between construction of $e^{tA_k}\bar{f}^k$ that happens once per likelihood evaluation, and subsequent computation of the inner product $\langle e^{tA_k}\bar{f}^k,b^k(x)\rangle$ which is required as often as once per time-step within a single likelihood evaluation.

Apart from the applications that this paper explicitly investigates, we list several other applications. First, consider the generic problem of estimating the parameters of a discretely sampled continuous-time stochastic process. Naturally the availability of moments lends itself to generalized method of moments-based estimation, such as described by \citet{zhou2003ito}. But maximum likelihood estimations can also benefit from moment approximations. There is a one-to-one mapping between moments and cumulants, and given some technical conditions, probability density functions can be accurately approximated by cumulants using Gram-Charlier type A series. Such approximations can be much more efficient than PDE, tree, or simulation-based approaches \citep{ait2002maximum}. Second, Gram-Charlier-based approximations of density functions can also be useful for option pricing \citep{popovic2012easy,tanaka2010applications}. Finally, \citet{cuchiero2012polynomial} outline how variance reduction techniques can benefit from knowledge about the moments. 
 
This paper is organized as follows. Section \ref{sec:notation} sets up the general notation of the paper. Section \ref{sec:theory} derives the theoretical results behind our approximation. Finally, sections \ref{sec:appshotrate} and \ref{sec:appshotcredit} show applications to the aforementioned short rate models and credit derivatives respectively. Appendix \ref{appendix} contains some supporting technical results.

\section{General notation}
\label{sec:notation}

This section sets up general notation for the remainder of this paper. We borrow part of our setup from \citet{cuchiero2012polynomial}. Central in our analysis is the Feller process $X$, i.e. a time-homogeneous continuous-time Markov process, with 
state space denoted by $D\subseteq \mathbb{R}^d$. If the process $X$ is not conservative, we augment the state space with a point $\Delta \notin D$ to get the augmented state space $D_\Delta$. 
This point is usually referred to as the \emph{cemetery state} for killed processes, and is used to incorporate discounting. Any function $f$ on $D$ is extended to $D_\Delta$ by the convention $f(\Delta)=0$.
We further consider the  Feller semigroup given $(S_t)_{t\geq 0}$ (often simply denoted $S_t$) given by
\begin{align}\label{eqn:semigroup}
S_t f(x) \coloneqq \mathbb{E}_x[f(X_t)],
\end{align}
and acting on all Borel-measurable functions $f:D_\Delta\to\mathbb{R}$ for which the expectation $\mathbb{E}_x$ is well defined. Here we used $\mathbb{E}_x$ to denote expectation under the law $\mathbb{P}_x$ that is such that the process starts in $x\in D$, i.e.~$\mathbb{P}_x[X_0=x]=1$. When we need (in applications to follow) that certain multi-powers of the $X_t$ have a finite expectation, then it will be implicitly assumed that such moments exist and are finite.
We denote the associated linear operator that describes the process as $\mathcal{A}$, i.e.\
\begin{align}\label{eqn:generator}
    \mathcal{A}f(x)  \coloneqq \lim_{t\downarrow 0}\frac{S_t f(x) - f(x)}{t},
\end{align}
for all functions $f:D_\Delta\to\mathbb{R}$ for which this limit is well defined. This set is the domain of $\mathcal{A}$, denoted $\mathcal{D}(\mathcal{A})$.

This paper relies on series representations, mostly w.r.t.~some orthonormal basis. To start of, we consider some sequence $b$ of linearly independent functions $b_i:D\to\mathbb{R}$, so $b=(b_i)_{i=0}^\infty$. We then further have sequences of real numbers $b(x)=(b_i(x))_{i=0}^\infty$, for $x\in D$.
We denote by $\mathcal{P}$ the space of functions that can be written as a \emph{formal (power) series} with respect to $b$, i.e.~for all $f\in\mathcal{P}$ there exists a sequence $\bar{f}=(\bar{f}_i)_{i=0}^\infty\subset\mathbb{R}$ such that
\begin{align}\label{eq:formal}
f(x)=\langle\bar{f},b(x)\rangle:=\sum_{i=0}^\infty{\bar{f}_ib_i(x)}.
\end{align}
We will need the sum in \eqref{eq:formal} to be convergent in a suitable norm. The sequence $\bar{f}$ then denotes an infinite dimensional vector representation of the sequence and $\langle\cdot,\cdot\rangle$ is the inner product notation for the infinite sum. We will formalize this now.

Consider a separable Hilbert space of functions on $D$, with a certain inner product. Typical examples are the $L^2$ space w.r.t.\ an underlying measure. As a specific example we mention the $L^2$ space of (Borel-measurable) functions $f$ on $\mathbb{R}$ satisfying $\int_\mathbb{R} f(x)^2\phi(x)\,\ud x<\infty$, where $\phi$ is the standard normal density. Clearly, this space contains all polynomials. Moreover, the Hermite polynomials form an orthonormal basis for this space, and choosing the $b_i$ in \eqref{eq:formal} as these polynomials, we have that the squared $L^2$-norm $\| f\|^2$  coincides with $\sum_{i=0}^\infty\bar{f}_i^2$. Taking the $b_i$ as the monic polynomials in \eqref{eq:formal}, then in the same $L^2$-space $\| f\|^2$ can be written as $\bar{f}^\top P\bar{f}$, for a certain strictly positive definite infinite dimensional matrix $P$. 
In what follows, we will always assume, unless stated otherwise, that $\mathcal{P}$ is a Hilbert space of functions w.r.t.~an appropriate inner product $\langle\cdot,\cdot\rangle$ and that it admits an orthonormal base $b$ such that any $f\in\mathcal{P}$ can be represented as in \eqref{eq:formal} with a sum that is convergent in $\mathcal{P}$. We denote by $\mathcal{H}$ the Hilbert space (actually an $\ell^2$-space) of vectors $\bar{f}$ associated with $f\in\mathcal{P}$ for which we impose that $\langle \bar{f},\bar{f}\rangle<\infty$. It follows that the $\ell^2$-norm of $\bar{f}$ coincides with the Hilbert space norm of $f$. With a little, but innocuous abuse of notation, we invariably use the same symbol to denote inner products, sums, norms and thus have $\langle \bar{f},\bar{f}\rangle=\|\bar{f}\|^2=\|f\|^2=\langle f,f\rangle$ and $f=\langle \bar{f},b\rangle$.

By $\mathcal{H}_k$ we denote the subspace of $\mathcal{H}$ of sequences $\bar{f}$ with $\bar{f}_i=0$ for all $i\geq k$. We further let $\mathcal{P}_k$ be the space `polynomials' with $k$ terms (the terminology is suggestive) associated with $\mathcal{H}_k$, meaning that $f\in\mathcal{P}_k$ if $f=\sum_{i=0}^{k-1}\bar{f}_ib_i$. An element $\bar{f}$ of $\mathcal{H}_k$ will be also in a natural way identified with a vector $(f_0,\ldots,f_{k-1})^\top\in\mathbb{R}^k$.
In the case of Hermite polynomials of a single variable, the space $\mathcal{P}_k$ consists of all polynomials of degree $k-1$ or less, hence another basis of $\mathcal{P}_k$ consists of all monomials of order $k-1$ or less. When it is convenient to work with polynomials of a maximum degree, we freely switch between elements of the orthonormal base and monomials. The latter case is an example of a choice of certain basis functions, which in general depend on the state space $D$, but will generally follow standard conventions. For higher-dimensional state spaces $D$, we can use  vector powers $x^k$, where $x\in\mathbb{R}^n$ and $k\in\mathbb{N}_0$, using multi-index notation, as
$x^k = x_1^{k_1} x_2^{k_2}\dots x_n^{k_n}$ for $k_i\in\mathbb{N}_0$, $\sum_{i=1}^n k_i = k$.
Then under this notation, for $D=\mathbb{R}^n$ and $D=\mathbb{R}^n_+$, $b(x)=(1,x,x^2,\ldots)^\top$ has the same symbolic representation as the one-dimensional monomial case. Another useful state space, to which we return later, is the set of basis vectors $D=\{e_1,\ldots,e_d\}$ of $\mathbb{R}^d$. In this case, $b(x)=x$ is an adequate basis as other powers (i.e.~$x^k$ in the vector sense and with $k\neq1$) of unit vectors are linearly dependent. 
A general notational convention follows. We write $b^k(x)=(b_0(x),\ldots,b_{k-1}(x))^\top$, also identified with $b^k(x)=(b_0(x),\ldots,b_{k-1}(x),0,\ldots)^\top$ the vector where the first $k$ entries of $b(x)$ are followed by zeros.

The crux of our approximation theory relies on finite-dimensional modifications of mappings on $\mathcal{H}$. To this end, we introduce some notation that involve projections and subspaces. Let $P_k:\mathcal{H}\rightarrow\mathcal{H}_k$, $k\geq 1$, be a sequence of projection operators, i.e.\ idempotent operators with co-domain $\mathcal{H}_k$. Above we have made the special choice where $\mathcal{H}$ is the $\ell^2$-space of sequences $\bar{f}$ (satisfying $\langle\bar{f},\bar{f}\rangle<\infty$), and $\mathcal{H}_k$ with the space of finite vectors $f^k=(f_0,\ldots,f_{k-1})^\top$ also identified with $(f_0,\ldots,f_{k-1},0,\ldots)^\top$. But Definitions~\ref{def:proj} and \ref{def:stable} extend to the case where $\mathcal{H}$ is  an arbitrary Hilbert space, with the $\mathcal{H}_k$ as subspaces of it.
We borrow further technical assumptions from \citet{kulkarni2008projection}.
\begin{definition}\label{def:proj}
Let $B:\mathcal{H}\rightarrow\mathcal{H}$ be a closed linear operator with domain $\mathcal{D}(B)$. A sequence of bounded projection operators $P_k$ is called \emph{well-behaved for $B$} if
\begin{enumerate}
    \item $P_k\bar{f}\to\bar{f}$ (meaning $\|P_k\bar{f}-\bar{f}\|\to 0$) for all $\bar{f}\in\mathcal{H}$ as $k\to\infty$,
    \item for every $\bar{f}\in\mathcal{D}(B)$, $P_k\bar{f}\in\mathcal{D}(B)$, and
    \item $BP_k\bar{f}\to B\bar{f}$ as $k\to\infty$ for all $\bar{f}\in\mathcal{D}(B)$.
\end{enumerate}
\end{definition}
In \citet{kulkarni2008projection} it is shown that the second requirement of Definition~\ref{def:proj} implies that $P_k\bar{f}\in\mathcal{D}(B)$ for any $\bar{f}\in\mathcal{H}$.
 
In this paper we will consider some natural choices of such projections, with the general $\mathcal{H}$ in this definition taken as  our choice  the space $\mathcal{P}$ of functions introduced above and its corresponding $\ell^2$-space $\mathcal{H}$, along with finite dimensional subspaces $\mathcal{P}_k$ and $\mathcal{H}_k$.

The first is the \emph{finite section} projection, i.e.\ we take the Hilbert space of sequences in $\ell^2$ with $P_k\bar{f}=(\bar{f}_0,\ldots,\bar{f}_{k-1},0,\ldots)^\top$. Correspondingly, if we fix a sequence of basis functions $b(x)$ in $\mathcal{P}$ that forms an orthonormal base and we let $P_k$ be the orthogonal projection on $\mathcal{P}_k$ which is the linear span of $b_0,\ldots,b_{k-1}$, then $P_kf\in\mathcal{P}_k$ has the representation $P_k\bar{f}\in\mathcal{H}_k$. Here we deliberately use the same notation $P_k$ for the projections onto $\mathcal{P}_k$ and $\mathcal{H}_k$.

The second is the \emph{Taylor approximation} around a point $x_0$. E.g.\ with $b(x)=(1,x,x^2,\ldots)^\top$ and $f(x)=x^2$ such that $\bar{f}=(0,0,1,0,\ldots)^\top$ we get a finite section projection $P_2\bar{f}=0$, and a Taylor approximation projection around $x_0$ of $P_2\bar{f}=(-x_0^2,2x_0,0,\ldots)^\top$. The latter follows from ignoring the last term in $f(x)=f(x_0)+f'(x_0)(x-x_0)+\tfrac12 f''(x_0)(x-x_0)^2$, which gives the affine approximation $x_0^2+2x_0(x-x_0)$.
Whether these two projections are well-behaved depends on the linear operator $B$ in Definition~\ref{def:proj}, but this is often easy to verify.

More general, we can consider a fixed sequence of basis functions $b(x)$, that forms an orthonormal base, and projections $P_k:\mathcal{H}\to\mathcal{H}_k$ onto finite dimensional subspaces $\mathcal{H}_k$. Note that for the orthogonal projections $P_k:\mathcal{H}\to\mathcal{H}_k$, given by  $\bar{f}^k\coloneqq P_k\bar{f}\in\mathcal{H}_k$, we have the nice property that the operator norm $\|P_k\|=1$ and $\|\bar{f}_k\|\leq \|\bar{f}\|$.
For any linear operator $A:\mathcal{H}\to\mathcal{H}$ define 
\begin{equation}\label{def:ak}
A_k\coloneqq P_kAP_k|_{\mathcal{H}_k}:\mathcal{H}_k\to \mathcal{H}_k.
\end{equation} 
The restriction to $\mathcal{H}_k$ lets us interpret $A_k$ as a $k\times k$ matrix when $\mathcal{H}_k$ has dimension $k$, as in the case that we just considered. Note that the $A_k$ are automatically bounded operators, whereas $A$ is typically only closed in the cases that are of interest for us.

For closed operators, we also define the notion of stability.
\begin{definition}\label{def:stable}
A family $\{B_k\}_{k=1}^\infty$ of closed operators $B_k:\mathcal{H}_k\rightarrow\mathcal{H}_k$ is stable if there exists a $k_0$ such that $B_k$ is invertible for all $k>k_0$ and $\sup_{k>k_0}{\|B_k^{-1}\|}<\infty$.
\end{definition}

Finally some more notational conventions follow. On finite dimensional spaces, we use the notation $O$, $I$ and $e_i$ to represent the zero matrix, the identity matrix and the $i$-th standard basis vector (the $i$-th column of $I$) respectively. The operator $\otimes$ stands for the Kronecker product, and $\mathrm{diag}(a)$ represents the diagonal matrix, with on the diagonal the elements of a vector $a$.

\section{Polynomial moment approximation theory}
\label{sec:theory}

This section contains the heart of this paper, i.e.\ the theoretical result behind the polynomial approximation that we propose.
Consider a Feller process $X$ on a state space $D$ with Feller semigroup of operators $S_t$ and (infinitesimal) generator $\mathcal{A}$. \begin{definition}\label{def:sequential}
A Feller semigroup $S_t$ and the associated Feller process is called \emph{sequential} if for all $f\in\mathcal{P}$ and $t\geq 0$ (i) $S_tf$ is well defined  and (ii) $S_tf\in\mathcal{P}$. So $\mathcal{P}$ is invariant under each $S_t$.
\end{definition}

\begin{remark}
\citet{cuchiero2012polynomial} call a time-homogenous Markov process $X$ polynomial with semigroup $S_t$ if for all $k\ge0$ we have that $S_tf\in\mathcal{P}_k$ for all $f\in\mathcal{P}_k$ and $t\geq0$. This appears close to being a sequential process, but polynomial processes are not automatically sequential.
\end{remark}
\begin{remark}
The semigroup $S_tf$ appears to represent a standard moment in (\ref{eqn:semigroup}), but can represent a discounted moment when $S_t$ is not a conservative semigroup, namely through appropriate specification of the killing rate at which the process jumps to the cemetery state $\Delta$. For more details see Section~\ref{sec:appshotrate} or \citet[Section 11]{duffie2003affine}.
\end{remark}
\begin{remark}
The time-homogeneity of the Feller assumption can potentially be relaxed to piece-wise time-homogeneity. A typical example happens in the context of local models, where up to a time $\tau_1$ the process $X$ evolves according to a certain semigroup and starting from $\tau_1$ according to another semigroup, and then repeatedly changing at times $\tau_k$. These times are usually chosen to correspond to tenors of derivatives. For some practical applications this is useful, but it complicates notation and analysis considerably, and will not be pursued further in the present paper.
\end{remark}
Any sequential semigroup is a family of linear maps $S_t$ from $\mathcal{P}$ to $\mathcal{P}$, and hence with a fixed basis $b$ these induce linear maps $\bar{S}_t$ from the sequence space $\mathcal{H}$ to $\mathcal{H}$ which have an infinite dimensional matrix representation. Let $g(t)\coloneqq S_tf$. Hence, using the vector representations $\bar{g}(t)$ of $g(t)$ and $\bar{f}$ of $f$, we may write
\begin{align*}
\bar{g}(t)=\bar{S}_t\bar{f},
\end{align*}
where the $ji$-element $\bar{S}_{t,ji}$ is defined as $\bar{S}_{t,ji}=\bar{c}_j^{(i)}(t)$ resulting from the representation of $c^{(i)}(t)\coloneqq S_tb_i$.

In an analogous way we consider the derivative in (\ref{eqn:generator}).
Assuming that each $b_i$ belongs to $\mathcal{D}(\mathcal{A})$, we put $c^{(i)}\coloneqq\mathcal{A}b_i$ and then the $c^{(i)}$ belong to $\mathcal{P}$ as well, i.e.\ $\mathcal{A}:\mathcal{D}(\mathcal{A})\to\mathcal{P}$. 
In all examples that follow, this assumption is satisfied. As all $c^{(i)}$ belong to $\mathcal{P}$, we can represent them by their coordinate vectors $\bar{c}^{(i)}\in\mathcal{H}$ with elements denoted $\bar{c}_j^{(i)}$. We then define the infinite-dimensional matrix $A$ representing a map from $\mathcal{H}$ into $\mathcal{H}$ having $ji$-entry $A_{ji}=\bar{c}_j^{(i)}$. We call $A$ the \emph{matrix generator} of the process $X$. In fact any linear map, call it $\mathcal{A}$ again, from $\mathcal{P}$ into itself naturally induces a map $A:\mathcal{H}\to\mathcal{H}$ in a similar way. As any $f\in\mathcal{P}$ can be identified with a sequence $\bar{f}\in\mathcal{H}$, and similarly a function $g\in\mathcal{P}$ can be identified with a sequence $\bar{g}\in\mathcal{H}$, one can define $\bar{g}=A\bar{f}$ if $g=\mathcal{A}f$.

Since a generator $\mathcal{A}$ of a semigroup is a closed operator, so is $A$. To see this, we use the duality between elements of $\mathcal{P}$ and those of $\mathcal{H}$. We use that $\mathcal{A}$ is closed if $\mathcal{D}(\mathcal{A})$ is complete w.r.t.~the graph norm given by $\|f\|_\mathcal{A}^2=\|f\|^2+\|\mathcal{A}f\|^2$ (see \citep{bobrowski2005functional}, Exercise~7.3.3) and, similarly, that $A$ is closed if $\mathcal{D}(A)$ is complete w.r.t.~the graph norm, which is given by $\|\bar{f}\|_A^2=\|\bar{f}\|^2+\|A\bar{f}\|^2$. But, by construction, $\|f\|_\mathcal{A}=\|\bar{f}\|_A$. In the sequel we will freely switch between $f\in\mathcal{P}$ having an orthogonal expansion in terms of a sequence $\bar{f}$, and between $\mathcal{A} f$ and $A\bar{f}$.

For Feller semigroups we have that $\partial_tS_tf=\mathcal{A}S_tf$. Hence for $g(t)=S_tf$, we have $\partial_tg(t)=\mathcal{A}g(t)$, and then  in the corresponding sequence space $\mathcal{H}$ one has $\partial_t\bar{g}(t)=A\bar{g}(t)$.

Parallelling finite dimensional notation, we also write $\bar{g}(t)=e^{tA}\bar{f}$, as is done for polynomial processes in  \citep[Theorem~2.7]{cuchiero2012polynomial}, although in general the matrix $A$ is genuinely  infinite-dimensional (and has infinite norm). 
We will use the finite dimensional matrix $A_k$ to approximate the semigroup. That is, we use, in ordinary finite dimensional notation,
\begin{align}\label{eq:ggk}
    \bar{g}^k(t)&\coloneqq \bar{S}_t^k\bar{f}^k,&\bar{S}_t^k&\coloneqq e^{tA_k},
\end{align} 
to approximate $\bar{g}(t)$. We us the name `polynomial approximation' as a consequence of the polynomial structure of the approximating $g^k(t)$ in $x$, when the $A_k$ are taken as in \eqref{def:ak}. Our main theoretical result is on the convergence of the approximation in \eqref{eq:ggk}.

\begin{theorem}\label{thm:main}
Consider a sequential process $X$ and a sequence of well-behaved projections $(P_k)_{k=1}^\infty$ (see Definition \ref{def:proj}). If  the $A_k$ as in \eqref{def:ak} have the property that there exists a $\lambda_0$ such that for all $\lambda>\lambda_0$\footnote{Do we actually need this $\lambda_0$? If so, where?} the sequence $(\lambda I-A_k)_{k=1}^\infty$ is stable (see Definition \ref{def:stable}), then $\bar{g}^k(t)\to \bar{g}(t)$, in other notation $e^{tA_k}\bar{f}^k\to e^{tA}\bar{f}$, as $k\to\infty$, with convergence in the $\ell^2$-norm.
\end{theorem}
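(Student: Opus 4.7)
The plan is to recognize Theorem \ref{thm:main} as a Hilbert-space instance of the Trotter--Kato approximation theorem for $C_0$-semigroups, whose classical hypotheses are (a) consistency of the generators on a core and (b) uniform stability of the resolvents. The proof then reduces to verifying both hypotheses from the assumptions in the statement and then invoking the theorem. To compare everything in one ambient space I would identify the finite-dimensional operator $e^{tA_k}$ with $\widetilde{S}^k_t := e^{tA_k}P_k$ acting on $\mathcal{H}$ (extended by zero off $\mathcal{H}_k$); with this convention, $e^{tA_k}\bar{f}^k = \widetilde{S}^k_t\bar{f}$ and the target is $\widetilde{S}^k_t \bar{f}\to e^{tA}\bar{f}$ in $\ell^2$.

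First I would check consistency. Fix $\bar{f}\in\mathcal{D}(A)$. Properties (2) and (3) of Definition \ref{def:proj} give $P_k\bar{f}\in\mathcal{D}(A)$ and $AP_k\bar{f}\to A\bar{f}$; together with property (1) applied to $A\bar{f}$ and the uniform bound $\|P_k\|\le 1$, a standard $\varepsilon/2$ argument yields
\[
A_kP_k\bar{f}=P_kAP_k\bar{f}\longrightarrow A\bar{f}\qquad \text{in }\mathcal{H}.
\]
Combined with $P_k\bar{f}\to\bar{f}$, this is precisely the consistency of $(A_k)$ with $A$ on the core $\mathcal{D}(A)$. The stability condition $\sup_{k>k_0}\|(\lambda I-A_k)^{-1}\|<\infty$ is assumed outright. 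Consistency plus this uniform resolvent bound upgrade to strong resolvent convergence: for $\bar{g}=(\lambda I-A)\bar{f}$ with $\bar{f}\in\mathcal{D}(A)$, I would write
\[
(\lambda I - A_k)^{-1}P_k\bar{g} - P_k\bar{f} = (\lambda I - A_k)^{-1}\bigl(P_k\bar{g}-(\lambda I - A_k)P_k\bar{f}\bigr),
\]
and the right-hand side tends to zero by consistency and the resolvent bound; a density argument then extends to arbitrary $\bar{g}\in\mathcal{H}$.

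With strong resolvent convergence and uniform resolvent bounds in hand, the Trotter--Kato theorem for $C_0$-semigroups on Hilbert spaces yields $\widetilde{S}^k_t \bar{f}\to e^{tA}\bar{f}$ in $\mathcal{H}$, uniformly on compact intervals of $t$. Reading this back through the identification gives exactly $\bar{g}^k(t)\to\bar{g}(t)$ in $\ell^2$, as required.

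The main obstacle I expect is not the abstract machinery but the verification that $A$ genuinely generates a $C_0$-semigroup on $\mathcal{H}$ with action $\bar{S}_t$. This requires showing that the strongly continuous Feller semigroup $(S_t)$ on $\mathcal{P}$, once transported through the basis-induced isometry between $\mathcal{P}$ and $\mathcal{H}$, remains strongly continuous in the $\ell^2$-norm, and that its infinitesimal generator coincides with the matrix operator $A$. The sequential property in Definition \ref{def:sequential} ensures the transport is well-defined; strong continuity and the identification of the generator then follow by combining the closedness argument given in the preceding discussion (where $\|f\|_{\mathcal{A}}=\|\bar{f}\|_A$) with a core argument applied to the linear span of the basis functions $b_i$. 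The role of the parameter $\lambda_0$ is precisely to guarantee that $\lambda$ lies in the resolvent set of all $A_k$ (for $k$ large) and of $A$ simultaneously, which is what the Trotter--Kato framework demands.
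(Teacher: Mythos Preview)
Your route via the Trotter--Kato approximation theorem is genuinely different from the paper's proof, which instead uses Phragm\'en's explicit representation of the semigroup as a resolvent series (Lemma~\ref{lemma:pd}) together with a dominated-convergence argument for those series (Lemma~\ref{lemma:sks}) and a three-term triangle inequality. Your consistency and resolvent-convergence steps are clean and correct, and they recover exactly what the paper obtains by quoting the projection-method theorem of Kulkarni--Rane.

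There is, however, a real gap. The classical Trotter--Kato theorem does not take ``strong resolvent convergence plus a uniform resolvent bound at each fixed $\lambda$'' as its hypotheses; it requires a uniform growth bound on the approximating semigroups, $\|e^{tA_k}\|\le Me^{\omega t}$ for constants $M,\omega$ independent of $k$ (equivalently, a Hille--Yosida estimate $\|(\lambda I-A_k)^{-n}\|\le M(\lambda-\omega)^{-n}$ uniform in $k$, $n$, and $\lambda>\omega$). The stability condition of Definition~\ref{def:stable} gives only $\sup_{k>k_0}\|(\lambda I-A_k)^{-1}\|<\infty$ with a bound that may depend on $\lambda$, which is strictly weaker. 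Without the uniform semigroup estimate the implication from resolvent convergence to semigroup convergence fails in general, so this step needs an additional argument.

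One way to close the gap, which you do not mention, is dissipativity: if $\bar{S}_t$ is a contraction semigroup on $\mathcal{H}$ (this is what the paper asserts when it says the $\bar{S}_t$ ``represent expectations'' and have norm one), then $\mathrm{Re}\,\langle A\bar{f},\bar{f}\rangle\le 0$ on $\mathcal{D}(A)$, and for orthogonal $P_k$ one gets $\langle A_k\bar{h},\bar{h}\rangle=\langle AP_k\bar{h},P_k\bar{h}\rangle\le 0$ on $\mathcal{H}_k$, so each $A_k$ generates a contraction and $\|e^{tA_k}\|\le 1$ uniformly. The paper obtains the same bound by a different route, arguing in the proof of Lemma~\ref{lemma:sks} that $e^{tA_k}P_k=P_k\bar{S}_tP_k$. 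Either device feeds both your Trotter--Kato argument and the paper's Phragm\'en argument; but you must supply one of them, since the stated stability assumption alone is not enough.
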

\begin{proof}
Let $t>0$ as the case $t=0$ is trivial. Consider Phragm\'en's representation, see, \citet{neubrander1987relation} of the semigroup $S_t$ and Lemma~\ref{lemma:pd}. For $f\in \mathcal{D}(\mathcal{A})$ one has 
$S_tf=\lim_{\lambda\to\infty}S_t(\lambda,\mathcal{A},f)$, with, see~\eqref{eq:defs}, 
\begin{align*}
S_t(\lambda,\mathcal{A},f)=\lambda\sum_{n=1}^\infty (-1)^{n-1}\frac{1}{(n-1)!}e^{n\lambda t}R(n\lambda,\mathcal{A})f,
\end{align*}
where $R(\lambda,\mathcal{A})\coloneqq(\lambda I-\mathcal{A})^{-1}$ denotes the resolvent. 
Naturally, when switching from $\mathcal{P}$ to $\mathcal{H}$, we can write this in matrix form, with $g(t)=S_tf$ and $\bar{g}(t)=\bar{S}_t\bar{f}=e^{At}\bar{f}$. One has, see also Corollary~\ref{corollary:pd},
\[
\bar{g}(t)=\bar{S}_t\bar{f}=\lim_{\lambda\to\infty}S_t(\lambda,A,\bar{f}),
\]
with
\begin{align*}
S_t(\lambda,A,\bar{f})=\lambda\sum_{n=1}^\infty (-1)^{n-1}\frac{1}{(n-1)!}e^{n\lambda t}R(n\lambda,A)\bar{f}.
\end{align*}
We can apply the same representation to $\bar{g}^k(t)
=e^{tA_k}\bar{f}^k$,
\begin{align*}
\bar{g}^k(t)
=e^{tA_k}\bar{f}^k
=\lim_{\lambda\to\infty}S_t(\lambda,A_k,\bar{f}).
\end{align*}
Next we embed the finite dimensional vector $\bar{g}^k(t)$, as any element of $\mathcal{H}_k$, in $\mathcal{H}$ simply by appending an infinite sequence of zeros and as such we consider $\bar{g}^k(t)$ as an element of $\mathcal{H}$. Likewise, we can also consider $R(n\lambda,A_k)\bar{f}^k$ as an element of $\mathcal{H}$. Hence we can consider convergence of the $\bar{g}^k(t)$ as elements in $\mathcal{H}$.

Fix some $\lambda>\lambda_0$. We first show that  $R(n\lambda,A_k)\bar{f}^k\to R(n\lambda,A)\bar{f}$. To this end, define the sequence of matrices $\{B_{n\lambda,k}\}_{k=1}^\infty$ by $B_{n\lambda,k}\coloneqq R(n\lambda,A_k)^{-1}=n\lambda I-A_k$. 
Note that $ n\lambda I-A_k=[P_k(n\lambda I-A)P_k]|_{\mathcal{H}_k}$.
Furthermore, every $n\lambda I-A$ is a closed operator because $A$ is closed. By our assumptions in the theorem, the projections $P_k$ are well behaved and the sequence $(B_{n\lambda,k})_{k=1}^\infty$ is stable for all $n\lambda\geq\lambda>\lambda_0$. Then the conditions of \citep[Theorem 3.1]{kulkarni2008projection} are satisfied and so, in the terminology of \cite{kulkarni2008projection} the projection method converges, i.e.~(recall that $\bar{f}^k=P_k\bar{f}$) 
\begin{align*}
    \lim_{k\to\infty} R(n\lambda,A_k)\bar{f}^k=R(n\lambda,A)\bar{f},
\end{align*}
for all $n\geq1$ and $\lambda\geq\lambda_0$, where the limit is taken in $\mathcal{H}$. Having established this convergence, we invoke
Lemma~\ref{lemma:sks} that states that then also
\begin{equation}\label{eq:sum1}
\lambda\sum_{n=1}^\infty(-1)^{n-1}\frac{e^{n\lambda t}}{(n-1)!}R(n\lambda,A_k)\bar{f}^k \to \lambda\sum_{n=1}^\infty(-1)^{n-1}\frac{e^{n\lambda t}}{(n-1)!}R(n\lambda,A)\bar{f}.
\end{equation}
Recall that our aim is to show that the $\bar{S}_t^k\bar{f}^k$ (considered as elements of $\mathcal{H}$) converge to $\bar{S}_t\bar{f}$, where $\bar{S}^k_t=e^{A_kt}$. Therefore, let $\varepsilon>0$ and consider
\begin{align}\label{eq:3things}
\|\bar{S}_t^k\bar{f}^k- \bar{S}_t\bar{f}\| 
& \leq \|\bar{S}_t^k\bar{f}^k- S_t(\lambda,A_k,\bar{f}^k)\| \nonumber\\
&\phantom{=}+ \|S_t(\lambda,A_k,\bar{f}^k)- S_t(\lambda,A,\bar{f})\| \nonumber\\
&\phantom{=}+ \|S_t(\lambda,A,\bar{f})- \bar{S}_t\bar{f}\|. 
\end{align}
It follows from the proof of Lemma~\ref{lemma:sks} that $\|\bar{S}_t^k\bar{f}^k\|\leq \|\bar{S}_tP_k\bar{f}\|$ and hence $\|\bar{S}_t^k\|\leq 1$ as the $\|\bar{S}_t\|$ (as they represent expectations) have norm one.
For $\lambda>\lambda_0$ the first and the last term on the right hand side in \eqref{eq:3things}  are by virtue of Lemma~\ref{lemma:pd} then together less than $\frac{2C}{\lambda}\|\bar{f}\|$. Choose then $\lambda$ such that these terms are both smaller than $\varepsilon$, uniformly in $k$.
For the chosen $\lambda$, the middle term in \eqref{eq:3things} can be made smaller than $\varepsilon$ by choosing $k$ larger than some $k_0=k_0(\lambda_0)$ by Lemma~\ref{lemma:sks}. Hence the total expression on the right of \eqref{eq:3things} is less than $3\varepsilon$ for all $k\geq k_0$. This concludes the proof.
\end{proof}

Theorem \ref{thm:main} dictates what steps should be followed to apply the approximation theory outlined in this section to a Feller process $X$ on state space $D$ with generator $\mathcal{A}$.
\begin{enumerate}
    \item Fix an appropriate basis $b(x)$ for the state space $D$ of the process.
    \item Verify that the process is sequential per Definition \ref{def:sequential}.
    \item Derive $A$ column by column based on $c^{(i)}=\mathcal{A}b_i$.
    \item Choose a projection $P_k$ that is well-behaved per Definition \ref{def:proj}.
    \item Verify the stability criterion in Theorem \ref{thm:main} per Definition \ref{def:stable}.
\end{enumerate}

The first step is generally straightforward. The second step is hard and in the examples in Sections \ref{sec:appshotrate} and \ref{sec:appshotcredit} we will implicitly conjecture that the process is sequential. Step three is again straightforward, although the notation can be somewhat involved for processes on a higher-dimensional state space.

In the examples below, we will choose the Taylor approximation projection unless indicated otherwise, since it tends to converge faster than the finite section projection. It is easy to verify that this projection is well-behaved in all examples.

The fifth and last step proves difficult in practical applications. The best we can do in the examples below is conjecture that the stability criterion holds based on graphical arguments. For example, for a CIR bond price we can calculate the spectral norms of $\|(\lambda I-A_k)^{-1}\|$ under the finite section projection. Figure \ref{fig:norm} shows that there appears to be an upper bound for each $\lambda$. If this is true, then stability is satisfied. 
\begin{figure}
    \centering
    \includegraphics[width=\textwidth]{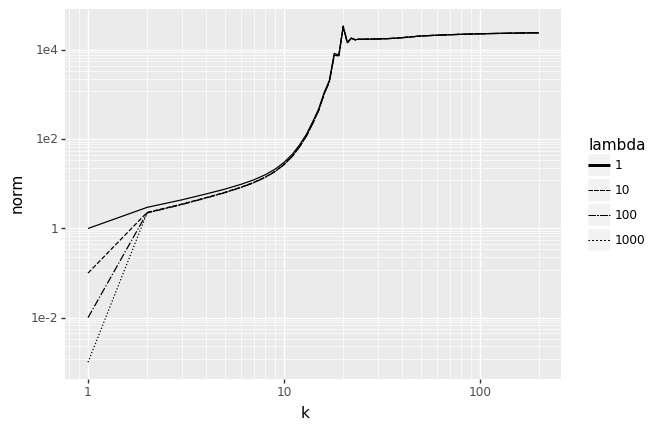}
    \caption{Convergence of the spectral norm for a CIR bond.}
    \label{fig:norm}
\end{figure}

\subsection{Numerical considerations}

In all applications in the sections to follow, we use a standard implementation of the algorithm by \citet{al2011computing} to compute the action of the matrix exponential $e^{tA_k}\bar{f}^k$ directly for a grid of times, since it is much faster and more numerically stable than computing $e^{tA_k}$ separately for several values of $t$ before multiplying by $\bar{f}^k$. In cases where the interest is in multiple moments, $\bar{f}^{k,1},\ldots,\bar{f}^{k,m}$, we use the same algorithm on $e^{tA_k}\bar{F}^k$, with $\bar{F}^k\coloneqq\left(\bar{f}^{k,1},\ldots,\bar{f}^{k,m}\right)$.

In calibration and estimation exercises, the interest may be in the derivative of $\bar{g}(t)$ with respect to the parameters of the process, $\theta$, which are encoded in $A_k$. From the derivative of the exponential map,
\begin{align}
    \frac{\partial}{\partial\theta_i}\bar{g}(t)
    =\left(\frac{\partial}{\partial\theta_i}e^{tA_k}\right)\bar{f}^k
    =\int_0^t{e^{(t-s)A_k}\left(\frac{\partial}{\partial\theta_i}A_k\right)e^{sA_k}\bar{f}^k\ud s}.
\end{align}
The integral can be efficiently approximated by computing the action of the matrix exponential (on $\frac{\partial}{\partial\theta_i}A_k$ and $\bar{f}^k$) for a fine grid of times.


\section{Applications to short rate models}\label{sec:appshotrate}

The next two sections analyze several possible applications of polynomial moment approximation. We start with the bond price approximations of two popular short rate models, before considering more complex credit spread models in the next section.

Many popular short rate models have the following structure: a Feller process $X$ is specified as well as a function $r:D\to\mathbb{R}_+$ such that the short rate at time $t$ is given by $r(X_t)$. In this context, the zero-coupon bond price is given by the expectation
\begin{align}\label{enq:zcbprice}
    P(x,t,T)=\mathbb{E}\left[e^{-\int_t^T{r(X_s)\ud s}}\middle|X_t=x\right]
\end{align}
As shown by \citet{duffie2003affine}, this price is equivalent to the first moment of a modified process with a generator $\mathcal{A}f(x)\coloneqq\mathcal{A}_xf(x)-r(x)f(x)$, with $\mathcal{A}_x$ the infinitesimal generator of the process $X$, and with $r(x)$ the killing rate at which the process jumps to the cemetery state $\Delta$.\footnote{Non-negativity of $r$ on $D$ is required to ensure that the semigroup $S_t$ is Feller, since Feller semigroups are contraction operators. When there is a constant lower bound, i.e.~when the discount rate can be written as $r(x)=\underline{r}+r'(x)$, with $r'$ non-negative on $D$, then we can bring $e^{-\underline{r}(T-t)}$ outside the expectation in (\ref{enq:zcbprice}). However, we have seen that the approximation may also work for negative discounting directly.} With $S_t$ the semigroup corresponding to $\mathcal{A}$, the bond price with respect to basis $b(x)=(1,x,x^2,\ldots)^\top$ is given by
\begin{align}
    P(x,t,T)&=S_{T-t}f(x),&\bar{f}=e_1=(1,0,\ldots)^\top.
\end{align}

Both the Cox-Ingersoll-Ross (CIR) and Black-Karasinksi models both fall in this class. In the CIR case a closed form solution exists, making it an excellent reference case for testing the approximation. In the Black-Karasinksi case we compare with Monte Carlo simulation, since an analytical solution does not exist.

\subsection{The Cox-Ingersoll-Ross bond price}

The CIR one factor short rate model is a popular model to price interest rate derivatives \citep{cox2005theory}. Its state space is the positive real line $D=\mathbb{R}_+$, such that negative rates are avoided. A closed form solution exists for the price of a (zero-coupon) bond to benchmark our approximation.

The CIR short rate dynamics have the following SDE,
\begin{align}
    \ud X_t&= \theta(\mu-X_t) \ud t+\sigma \sqrt{X_t} \ud W_t,&
    r(X_t)&=X_t.
\end{align}
The infinitesimal generator of the modified process is
\begin{equation}
    \mathcal{A}f=\theta(\mu-x)\frac{\partial f}{\partial x}+\frac12 \sigma^2x\frac{\partial^2f}{\partial x^2}-xf,
\end{equation}
where the last term is the adjustment that allows us to compute the bond price as the first moment of the process.

Applying the infinitesimal generator to the base elements $x^i,0\le i \le k-1$ we get
\begin{align}
    \mathcal{A}b_{i+1}(x)=\mathcal{A}x^i
    =-\theta ix^i+\left(\theta\mu i+\tfrac12 \sigma^2i(i-1)\right)x^{i-1}-x^{i+1}.
\end{align}
Only $\mathcal{A}b_k(x)$ has a power of $x$ higher than $x^{k-1}$, the highest power in the polynomial basis $b^k(x)$,
\begin{align*}
    \mathcal{A}b_{k}(x)
    =-\theta (k-1)x^{k-1}+\left(\theta\mu(k-1)+\tfrac12 \sigma^2(k-1)(k-2)\right)x^{k-2}-x^k.
\end{align*}
This higher term $x^k$ is projected onto $b_k(x)$ via Taylor approximation around $x_0=\mu$. The $k-1$-th order approximation can be stored in a $k$-dimensional vector $\bar{p}^k(x_0)$,
\begin{align*}
    \sum_{i=0}^{k-1}{\frac{1}{i!}\left.\partial_x^{(i)}x^k\right|_{x_0}(x-x_0)^i}
    &=\langle\bar{p}^k(x_0),b^k(x)\rangle,
    &\bar{p}_{i+1}^k(x_0)&=-{k\choose i}(-x_0)^{k-i}.\text{\footnotemark}
\end{align*}
\footnotetext{
This follows from the $k$-th order expansion, which is $x^k$, and the binomial theorem,
\begin{align*}
    &\sum_{i=0}^{k-1}{\frac{1}{i!}\left.\partial_x^{(i)}x^k\right|_{x_0}(x-x_0)^i}
    =x^k-\frac{1}{k!}\left.\partial_x^{(k)}x^k\right|_{x_0}(x-x_0)^k\\
    &\qquad=x^k-(x-x_0)^k
    =x^k-\sum_{i=0}^k{{k\choose i}x^i(-x_0)^{k-i}}
    =-\sum_{i=0}^{k-1}{{k\choose i}x^i(-x_0)^{k-i}}.
\end{align*}}
This leads to the following projected matrix generator,
\begin{align*}
A_k&=\begin{bmatrix}
0&\theta\mu&0&0&\cdots&0\\
-1&-\theta&2\theta\mu+\sigma^2&0&\cdots&0\\
0&-1&-2\theta&3\theta\mu+3\sigma^2&\cdots&0\\
\vdots&\ddots&\ddots&\ddots&\ddots&\vdots\\
0&\cdots&0&-1&-(k-2)\theta&(k-2)\theta\mu+\tfrac12(k-2)(k-3)\sigma^2\\
0&\cdots&0&0&-1&-(k-1)\theta
\end{bmatrix}
\\&\phantom{=}+e_k^\top\otimes\bar{p}^k(\mu).
\end{align*}

To show the accuracy of the proposed approximation, we compare it with the analytical solution for eight sets of different but typical parameters.
\begin{figure}
\resizebox{\hsize}{!}{\includegraphics*{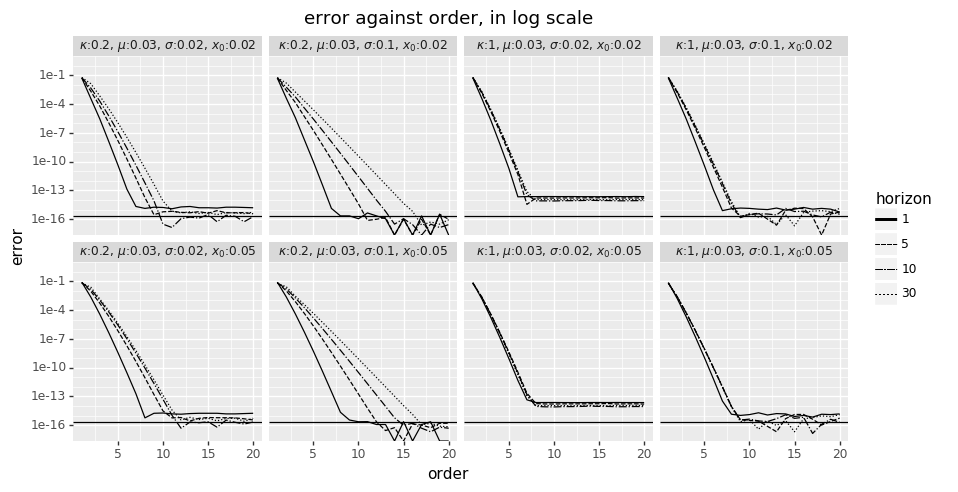}}
\caption{Absolute approximation error of zero-coupon bond price yield. Different lines represent tenors. The horizontal lines indicate machine precision. }
\label{fig:orders_cir}
\end{figure}
Figure \ref{fig:orders_cir} shows the error decreases exponentially as the approximation order $k$ increases. For most sets of parameters, it converges to one or two orders of magnitude above machine precision, and remains stable as the order increases.

\subsection{The Black-Karasinksi bond price}

The Black-Karasinski short rate model is similar in structure to the CIR model, but assumes short rates follow an exponential OU process. No analytical solution is available, hence finding an efficient and accurate approximation has received considerable academic attention. This section shows that our approximation is accurate by means of a Monte Carlo simulation. 

\citet{black1991bond} assume the short rate $r_t$ has dynamics
\begin{align}
 \ud\ln{r_t}=\theta\left(\mu-\ln{r_t}\right) \ud t+\sigma \ud W_t. 
\end{align}
In our setup this translates to
\begin{align}
\ud X_t&= \theta(\mu-X_t) \ud t+\sigma \ud W_t,&
r(X_t)&=e^{X_t}.
\end{align}
The infinitesimal generator of the modified process follows as
\begin{align}
    \mathcal{A}f=\theta(\mu-x)\frac{\partial f}{\partial x}+\frac12 \sigma^2\frac{\partial^2f}{\partial x^2} -e^xf.
\end{align}
Applying this infinitesimal generator to the base elements $x^i$, $0\le i \le k-1$, 
\begin{align}
    \mathcal{A}b_{i+1}(x)=\mathcal{A}x^i=-\theta ix^i+\theta\mu ix^{i-1}+\frac12 \sigma^2i(i-1)x^{i-2}-\sum_{j=0}^\infty{ \frac{x^{j+i}}{j!}}.
\end{align}
There are clearly higher orders than $b_k(x)=x^{k-1}$ present in each expression, as a consequence of $e^x$ in $\mathcal{A}$. As in the CIR case, we use a Taylor projection around the unconditional mean $x_0=\mu$ to derive $A$, but the results are not shown here.




We calculate a set of zero coupon bond yields with different parameters, and use Monte Carlo simulation to benchmark the approximation quality. The unconditional distribution of $X_t$ is Gaussian, with mean $\mu$ and variance $\frac{\sigma^2}{2\theta}$. Thus the steady-state distribution of the short rate $r(X_t)=e^{X_t}$ has mean $\bar{\mu} = \exp(\mu+\frac{\sigma^2}{4\theta})$, and variance $\bar{\sigma}^2 = \big(\exp(\frac{\sigma^2}{2\theta})-1\big)\exp(2\mu+\frac{\sigma^2}{2\theta})$. We fix $\bar{\mu}=0.03$ and vary the other parameters to obtain realistic alternative sets of parameters.
\begin{enumerate}
    \item values of $x=\ln{r_0}$: $\ln{0.01}$, $\ln{0.03}$ and $\ln{0.06}$,
    \item values of $\theta$: 0.02 and 0.1,
    \item values of $\bar{\sigma}$: 6\% and 12\%.
\end{enumerate}
Table \ref{tab:1} outlines the approximation error for maturities 1, 2, 5, 10 and 20 years, and for approximation orders $k=5,10,20$. The approximation error versus Monte Carlo decreases with the order. For order $k=20$, the error is no greater than 1 bps, and lies within the confidence bounds of the Monte Carlo simulation. Figure \ref{fig:orders} gives more graphical insight into the pattern of convergence. For all specifications the errors converge within the confidence intervals of the Monte Carlo simulation. We see a slower convergence for specifications with higher unconditional volatility.


\begin{table}
\caption{Zero coupon bond yield, comparison between Monte Carlo and proposed approximation.}
\label{tab:1}       

\begin{tabular}{cccccccc}
\hline\noalign{\smallskip}
\multicolumn{3}{c}{Parameters}& \multirow{2}{*}{Maturities} & \multirow{2}{*}{MC Yields}&\multicolumn{3}{c}{Errors (bps)}  \\
\noalign{\smallskip}\cline{1-3} \cline{6-8}\noalign{\smallskip}
$\theta$ & $\bar{\sigma}$&$\exp{x}$& & &k=5&k=12&k=20\\
\noalign{\smallskip}\hline\noalign{\smallskip}
\multirow{5}{*}{0.02}&\multirow{5}{*}{0.06}&\multirow{5}{*}{0.01}&1&1.02\%&0&0&0\\
&&&2&  1.04\%&0&0&0\\
&&&5&  1.09\%&0&0&0\\
&&&10& 1.17\%&0&0&0\\
&&&20& 1.29\%&0&0&0\\
\noalign{\smallskip}\hline\noalign{\smallskip}

\multirow{5}{*}{0.02}&\multirow{5}{*}{0.06}&\multirow{5}{*}{0.03}&1&3.02\%&0&0&0\\
&&&2&3.04\%&0&0&0\\
&&&5&3.09\%&0&0&0\\
&&&10&3.10\%&1&0&0\\
&&&20&2.99\%&7&0&0\\
\noalign{\smallskip}\hline\noalign{\smallskip}

\multirow{5}{*}{0.02}&\multirow{5}{*}{0.06}&\multirow{5}{*}{0.06}&1&6.00\%&9&0&0\\
&&&2&5.99\%&7&0&0\\
&&&5&5.92\%&2&0&0\\
&&&10&5.67\%&20&0&0\\
&&&20&5.01\%&44&1&0\\
\noalign{\smallskip}\hline\noalign{\smallskip}

\multirow{5}{*}{0.02}&\multirow{5}{*}{0.12}&\multirow{5}{*}{0.01}&1&1.02\%&0&0&0\\
&&&2&1.05\%&0&0&0\\
&&&5&1.12\%&0&0&0\\
&&&10&1.22\%&0&0&0\\
&&&20&1.33\%&2&0&0\\
\noalign{\smallskip}\hline\noalign{\smallskip}

\multirow{5}{*}{0.02}&\multirow{5}{*}{0.12}&\multirow{5}{*}{0.03}&1&3.04\%&5&0&0\\
&&&2&3.08\%&5&0&0\\
&&&5&3.15\%&5&0&0\\
&&&10&3.16\%&3&0&0\\
&&&20&2.95\%&25&0&0\\
\noalign{\smallskip}\hline\noalign{\smallskip}

\multirow{5}{*}{0.02}&\multirow{5}{*}{0.12}&\multirow{5}{*}{0.06}&1&6.03\%&36&0&0\\
&&&2&6.05\%&32&0&0\\
&&&5&6.00\%&14&0&0\\
&&&10&5.68\%&35&3&0\\
&&&20&4.79\%&122&2&0\\
\noalign{\smallskip}\hline\noalign{\smallskip}

\multirow{5}{*}{0.1}&\multirow{5}{*}{0.06}&\multirow{5}{*}{0.01}&1&1.10\%&0&0&0\\
&&&2&1.19\%&0&0&0\\
&&&5&1.43\%&0&0&0\\
&&&10&1.68\%&1&0&0\\
&&&20&1.88\%&3&0&0\\
\noalign{\smallskip}\hline\noalign{\smallskip}

\multirow{5}{*}{0.1}&\multirow{5}{*}{0.06}&\multirow{5}{*}{0.03}&1&3.11\%&1&0&0\\
&&&2&3.18\%&2&0&0\\
&&&5&3.27\%&2&0&0\\
&&&10&3.15\%&4&0&0\\
&&&20&2.81\%&9&0&0\\
\noalign{\smallskip}\hline\noalign{\smallskip}

\multirow{5}{*}{0.1}&\multirow{5}{*}{0.06}&\multirow{5}{*}{0.06}&1&6.00\%&14&0&0\\
&&&2&5.94\%&13&0&0\\
&&&5&5.52\%&0&0&0\\
&&&10&4.72\%&17&1&0\\
&&&20&3.73\%&19&0&0\\
\noalign{\smallskip}\hline\noalign{\smallskip}

\multirow{5}{*}{0.1}&\multirow{5}{*}{0.12}&\multirow{5}{*}{0.01}&1&1.12\%&0&0&0\\
&&&2&1.25\%&1&0&0\\
&&&5&1.54\%&3&0&0\\
&&&10&1.74\%&2&1&0\\
&&&20&1.78\%&11&2&1\\
\noalign{\smallskip}\hline\noalign{\smallskip}

\multirow{5}{*}{0.1}&\multirow{5}{*}{0.12}&\multirow{5}{*}{0.03}&1&3.19\%&10&0&0\\
&&&2&3.33\%&14&0&0\\
&&&5&3.43\%&12&0&0\\
&&&10&3.16\%&7&3&0\\
&&&20&2.63\%&20&2&0\\
\noalign{\smallskip}\hline\noalign{\smallskip}

\multirow{5}{*}{0.1}&\multirow{5}{*}{0.12}&\multirow{5}{*}{0.06}&1&6.15\%&48&0&0\\
&&&2&6.16\%&49&0&0\\
&&&5&5.67\%&21&4&0\\
&&&10&4.64\%&23&7&1\\
&&&20&3.47\%&37&3&0\\
\noalign{\smallskip}\hline\noalign{\smallskip}

\end{tabular}

\end{table}



\begin{figure}
\resizebox{\hsize}{!}{\includegraphics*{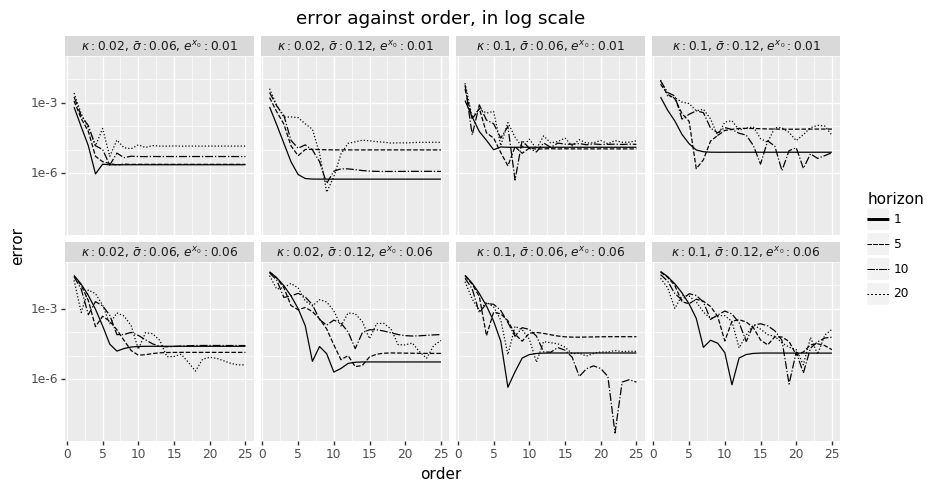}}
\caption{Absolute approximation error of zero-coupon bond price yield against Monte Carlo simulation. Different lines represent tenors. For different parameters the unconditional mean $\bar{\mu}$ is kept constant at $0.03$.}
\label{fig:orders}
\end{figure}

\section{Applications to credit derivatives}\label{sec:appshotcredit}

We follow the setup of the generalized Markovian model of credit rating migrations introduced by \citet{lando1998cox}. This model assumes that companies migrate independently within a set of $m$ ratings, e.g.~$\{\mathrm{AAA},\mathrm{AA},\ldots,\mathrm{CCC},\mathrm{D}\}$, where AAA is the highest quality rating and D represents default, or $\{\mathrm{IG},\mathrm{HY},\mathrm{D}\}$ for investment grade and high-yield bonds. A company's rating at time $R_t$ follows a continuous-time Markov chain with the ratings as states, and with $m\times m$ generator matrix $Q(Y_t)$ that depends on a latent driving process $Y_t$ of state variables.

Consider the $m\times m$ rating migration probability matrix conditional on the full history of state variables, i.e.
\begin{align}
    P_{ij}^Y(t)\coloneqq\mathbb{P}\left[R_t=j\middle|R_0=i,\mathcal{F}^Y_t\right],
\end{align}
where $\mathcal{F}^Y_t\coloneqq\sigma(Y_s,0\le s\le t)$ is the natural filtration of the stochastic process $Y_t$. Then $P^Y\!(t)$ follows the Kolmogorov forward equation
\begin{align}
    \partial_tP^Y\!(t)&=P^Y\!(t)Q(Y_t),&P^Y\!(0)&=I_m.
\end{align}
In order to derive credit spreads and rating migration probabilities, the interest is in the \emph{rating migration matrix}
\begin{align}
    P_{ij}(t,y)\coloneqq\mathbb{P}_y[R_t=j|R_0=i]
    =\mathbb{E}_y\!\left[P^Y_{ij}(t)\right].
\end{align}

As shown by \citet{lando1998cox} in the specific case that $Q(Y_t)$ commute and are diagonalizable, we can solve the Kolmogorov forward equation and write
\begin{align}
    P^Y\!(t)=e^{\int_0^t{Q(Y_s)\ud s}}
    =B\,e^{\int_0^t{D(Y_s)\ud s}}B^{-1}
    =B\,\mathrm{diag}_i\!\left(e^{\int_0^t{D_{ii}(Y_s)\ud s}}\right)B^{-1},
\end{align}
with diagonalization $Q(y)=BD(y)B^{-1}$ where $D(y)$ is a diagonal matrix of (non-positive) eigenvalues. This strategy uses the fact that commuting diagonalizable matrices are simultaneously diagonalizable, i.e.~share the matrix $B$. Taking the expectation results in a set of bond price-like formulas that can be solved analytically in certain cases, i.e.
\begin{align}\label{eqn:anl1}
    P(t,y)
    =\mathbb{E}_y\!\left[e^{\int_0^t{Q(Y_s)\ud s}}\right]
    =B\,\mathrm{diag}_i\!\left(\mathbb{E}_y\!\left[e^{\int_0^t{D_{ii}(Y_s)\ud s}}\right]\right)B^{-1}.
\end{align}

In this section, we will \emph{not} assume such commuting property of the generators, and use the proposed approximation strategy to calculate the rating migration matrix. To this end, we define the basis vector-valued process $Z_t=e_{R_t}$ with state space of $m$-dimensional basis vectors $E=\{e_1,\ldots,e_m,\}$. We assume that $Y_t$ follows an $n$-dimensional time-homogenous It\^{o} diffusion with state space $D'$. The SDE for the joint process $X\coloneqq(Y,Z)$ is
\begin{align}\label{eqn:sde_credit}
\ud Y_t &= \mu(Y_t)\ud t+\sigma(Y_t)\ud W_t, \\
\ud Z_t &= \sum_{i=1}^m{Z_{i,t-}\sum_{j\neq i}{(e_j-e_i)}}\ud N_t^{ij},\nonumber
\end{align}
where $N_t^{ij}$ are Poisson processes with intensity $\mathbb{E}[\ud N_t^{ij}|\mathcal{F}_t^Y]=Q_{ij}(Y_t)\ud t$. Intuitively if the Markov chain $R_t$ is in state $i$ then $Z_{i,t-}=1$ and it migrates to state $j\neq i$ with intensity $Q_{ij}(Y_t)$. A jump to state $j$ modifies $Z_t$ by subtracting the current state $e_i$ and adding the new state $e_j$. It follows from basic manipulation that
\begin{align}
\ud Z_t &= Q(Y_t)^\top Z_{t-}\ud t+\ud M_t,
\end{align}
with $M_t$ a martingale.\footnote{To see this use $Q(Y_t)1=0$ to get
\begin{align*}
\mathbb{E}[\ud Z_t|\mathcal{F}_t]
&=\sum_{i=1}^m{Z_{i,t-}\sum_{j\neq i}{(e_j-e_i)}}\mathbb{E}[\ud N_t^{ij}|\mathcal{F}_t^Y]
=\sum_{i=1}^m{Z_{i,t-}\sum_{j=1}^m{(e_j-e_i)}}Q_{ij}(Y_t)\ud t
\\
&=\sum_{i=1}^m{Z_{i,t-}\sum_{j=1}^m{e_j}}Q_{ij}(Y_t)\ud t
-\sum_{i=1}^m{Z_{i,t-}\sum_{j=1}^m{e_i}}Q_{ij}(Y_t)\ud t\\
&=\sum_{j=1}^m{\left(\sum_{i=1}^m{Z_{i,t-}Q_{ij}(Y_t)}\right)e_j}\ud t
-\sum_{i=1}^m{Z_{i,t-}e_i(Q(Y_t)1)_i}\ud t\\
&=\sum_{j=1}^m{(Z_{t-}^\top Q(Y_t)e_j)e_j}\ud t
=\sum_{j=1}^m{(Q(Y_t)^\top Z_{t-})_je_j}\ud t
=Q(Y_t)^\top Z_{t-}\ud t.
\end{align*}}
In this setting, we can express the rating migration matrix as an expectation that conforms our approximation approach,
\begin{align*}
P_{ij}(t,y)&=\mathbb{P}_y[Z_t=e_j|Z_0=e_i]=\mathbb{E}_{y,e_i}[\langle e_j,Z_t\rangle]=S_tf(y,e_i)
\end{align*}
where $f(y,z)=z_j$. In order to apply the approximation, we need the generator of the process $X$. If $Y$ has generator $\mathcal{A}^y$, then following standard arguments, the generator of the process $X$ is
\begin{align}
\mathcal{A}f(y,z)&=\mathcal{A}^yf(y,z)+\mathcal{A}^zf(y,z),&\mathcal{A}^zf(y,z)&\coloneqq z^\top Q(y)\begin{bmatrix}f(y,e_1)\\\vdots\\f(y,e_m)\end{bmatrix}.
\end{align}
It is easy to see that with $b(y)$ an appropriate basis for $Y$, $b(x)\coloneqq b(y)\otimes z$ is appropriate for $X$. In what follows we will assume that $b(y)=(1,y,y^2,\ldots)$ such that $b(x)=(z,y\otimes z,y^2\otimes z,\ldots)$. With $\ell$ being the number of powers of $y$ included, for $n=1$ this basis has dimension $\ell m$, for $n=2$ the dimension is $\ell(\ell+1)m/2$, and in general for $n>1$ the dimension is $\sum_{i=0}^{\ell-1}\binom{n+i-1}{n-1}m$.

The steps in approximating $P(t,y)$ are as follows. First we derive the matrix equivalent of the generator, $A$, by applying the generator $\mathcal{A}$ to the basis $b(x)$, and construct the finite approximation $A_k$. Then $P_{ij}(t,y)=S_tf(y,e_i)$ with $f(y,z)=z_j=\langle(e_j,0,\ldots),b(y,z)\rangle$. Therefore,
\begin{align}
    P_{ij}(t,y)&\approx\left\langle e^{tA_k}(e_j,0)^\top,b^k(y,e_i)\right\rangle,\\
    P(t,y)&\approx(b^{k/m}(y)^\top\otimes I_m)e^{tA_k}(I_m,O)^\top.
\end{align}

Due to the additive structure of the generator, we can come up with a few simplifying results to aid the construction of $A$ and $A_k$. To do so, we split $A=A^y+A^z$ in accordance with the decomposition $\mathcal{A}=\mathcal{A}^y+\mathcal{A}^z$. The following proposition establishes that we can compute $A^y$ in isolation, i.e.~without knowing the generator matrix $Q(y)$ or its dimension $m$. 

\begin{proposition}\label{prp:Ay}
Consider the sequential process $Y$ with infinitesimal generator $\mathcal{A}^y$ in accordance with the specification above. Derive $A^1$ by applying $\mathcal{A}^y$ to the basis $b(y)$ assuming $Q(y)=0$ and $m=1$. Then for general $Q(y)$ it holds that
\begin{align}
A^y = A^1 \otimes I_m.
\end{align}
\end{proposition}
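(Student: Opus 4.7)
The plan is to unpack the definition of the matrix $A^y$ directly from the action of $\mathcal{A}^y$ on the product basis $b(x) = b(y)\otimes z$, and observe that $\mathcal{A}^y$ does not see the $z$ component, so the $z$-part of the basis passes through as an identity factor. Concretely, I would index the basis $b(x)$ in blocks of size $m$: the $i$-th block consists of the $m$ functions $y^i z_1,\ldots,y^i z_m$ (corresponding to $y^i \otimes z$), so that rows and columns of $A^y$ carry a double index $(i,j)$ with $i$ the power of $y$ and $j\in\{1,\ldots,m\}$ the $z$-coordinate.

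The first key step is to apply $\mathcal{A}^y$ to a single basis element $y^i z_j$. Because $\mathcal{A}^y$ differentiates only in $y$ and $z_j$ is constant in $y$, we have
\begin{equation*}
\mathcal{A}^y(y^i z_j) \;=\; \bigl(\mathcal{A}^y y^i\bigr)\, z_j .
\end{equation*}
By the definition of $A^1$ in the statement of the proposition (the matrix obtained by applying $\mathcal{A}^y$ to $b(y)$ in the scalar case $m=1$, $Q=0$), we have the expansion $\mathcal{A}^y y^i = \sum_k A^1_{ki}\, y^k$, which the sequential assumption for $Y$ guarantees lies in $\mathcal{P}$. Substituting and using linearity,
\begin{equation*}
\mathcal{A}^y(y^i z_j) \;=\; \sum_k A^1_{ki}\,(y^k z_j).
\end{equation*}

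The second step is to read off the $(k,l),(i,j)$ entry of $A^y$: the coefficient of the basis function $y^k z_l$ in $\mathcal{A}^y(y^i z_j)$ equals $A^1_{ki}\,\delta_{jl}$. This is exactly the defining entrywise formula for the Kronecker product with the identity in the second factor, i.e.\ $A^1 \otimes I_m$ under the block ordering used for $b(x) = b(y)\otimes z$. Passing from entrywise equality back to the statement $A^y = A^1 \otimes I_m$, and noting that the same argument applies verbatim to the finite truncations $A^y_k$ used in practice, completes the proof.

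The argument is essentially bookkeeping; the only real subtlety is keeping the Kronecker-product conventions consistent with the chosen block ordering of $b(x)$, and checking that the expansion $\mathcal{A}^y y^i = \sum_k A^1_{ki} y^k$ is well defined in $\mathcal{P}$ so that the coefficient read-off is unambiguous. Both points are guaranteed by the sequential hypothesis on $Y$, so no analytical difficulty arises.
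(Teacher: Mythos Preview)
Your argument is correct and is exactly the natural one: since $\mathcal{A}^y$ acts only on the $y$-variables, it passes through the $z_j$ factor, and reading off coefficients in the product basis $b(y)\otimes z$ yields $A^y_{(k,l),(i,j)}=A^1_{ki}\delta_{jl}$, which is the entrywise definition of $A^1\otimes I_m$. The paper in fact states this proposition without proof, so there is no alternative approach to compare against; your computation is the intended justification. One minor remark: the well-definedness of the expansion $\mathcal{A}^y y^i=\sum_k A^1_{ki}y^k$ is not literally the sequential hypothesis (which concerns $S_t$), but rather the standing assumption in Section~\ref{sec:theory} that each $b_i\in\mathcal{D}(\mathcal{A})$ with $\mathcal{A}b_i\in\mathcal{P}$.
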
 

\subsection{Migrations driven by multivariate CIR processes}

Consider the process $X=(Y,Z)$ on domain $D=\mathbb{R}^n_+\times E$ where $E=\{e_1,\ldots,e_m\}$ such that $d=n+m$ through its SDE,
\begin{align}\label{eqn:sde_credit}
\ud Y_t &= K( \mu-Y_t)\mathrm{d} t +\mathrm{diag}_i \left( \sigma_i\sqrt{Y_{i,t}} \right)\mathrm{d} W_t, \\
\ud Z_t &= Q(Y_t)^\top Z_t\mathrm{d} t+\mathrm{d} M_t,\nonumber
\end{align}
where $Y_t$ follows a multivariate CIR process with $n\times n$ mean reversion speed $K$ to means $\mu$. In case $K$ is diagonal, $Y_{i,t}$ are $n$ independent CIR processes. We choose the generator matrix $Q(y)\coloneqq\sum_{i=1}^ny_iQ_i$, where $Q_i$ are generator matrices of continuous-time Markov chains. Since $Y_t$ is non-negative $Q(Y_t)$ is a well-defined generator matrix.

In this specific case, the generator of the process $X$ is
\begin{align*}
\mathcal{A}f(y,z)
&=\sum_{i=1}^n\left(\frac12\sigma_i^2 y_i\frac{\partial^2 f(y,z)}{\partial y_i^2}
+e_i^\top K(\mu-y)\frac{\partial f(y,z)}{\partial y_i}\right)
\\&\phantom{=}+\sum_{i=1}^n{y_iz^\top Q_i\begin{bmatrix}f(y,e_1)\\\vdots\\f(y,e_m)\end{bmatrix}}
\end{align*}
Applying $\mathcal{A}^y$ to elements of $b^k(y)$ always returns order of at most $x^{k-1}$, hence $A_y^1$ is upper triangular, and so is $A^y$. $A^z$ is not upper triangular, and thus the projection method matters here. We choose Taylor approximation around the means $y_0=\mu$ in all cases.


\subsubsection{Migrations driven by a univariate CIR process}

\citet{arvanitis1999building} apply the model above to the univariate case, i.e.~$n=1$. With diagonalization $Q_1=BDB^{-1}$, we use (\ref{eqn:anl1}) to obtain
\begin{align}\label{eqn:anl2}
    P(t,y)=B\,\mathrm{diag}_i\!\left(\mathbb{E}_y\!\left[e^{\int_0^t{D_{ii}Y_s\ud s}}\right]\right)B^{-1}.
\end{align}
Since $Y$ follows a CIR process and $D_{ii}$ are the non-positive eigenvalues of $Q_1$, $-Y_sD_{ii}$ is either 0 or follows a CIR process. Thus each diagonal element is either 1 or a CIR bond price, which has an analytical solution. Then $P(t,y)$ has an analytical solution that we can use as a benchmark for our approximation.

The matrix $A=A^y+A^z=A^1\otimes I_m+A^z$ has the following components, as also found in \citet{cuchiero2012polynomial},
\begin{align*}
A_k^1=\begin{bmatrix}
0&K\mu&0&0&\cdots&0\\
0&-K&2K\mu+\sigma^2&0&\cdots&0\\
0&0&-2K&3K\mu+3\sigma^2&\cdots&0\\
\vdots&\ddots&\ddots&\ddots&\ddots&\vdots\\
0&\cdots&0&0&-(k-2)K&(k-2)K\mu+\tfrac12(k-2)(k-3)\sigma^2\\
0&\cdots&0&0&0&-(k-1)K
\end{bmatrix}
\end{align*}
The structure of $A_z$ requires the same Taylor approximation as in the CIR bond price case,
\begin{align*}
A^z_k =
\begin{bmatrix}
O&\cdots&O&\bar{p}_1^{k/m}(\mu)Q_1\\
Q_1&\cdots&O&\bar{p}_2^{k/m}(\mu)Q_1\\
\vdots&\ddots&\vdots&\vdots\\
O&\cdots&Q_1&\bar{p}_{k/m}^{k/m}(\mu)Q_1
\end{bmatrix}
=
\begin{bmatrix}
\begin{matrix}0^\top\\I\end{matrix}\vline&\bar{p}^{k/m}(\mu)
\end{bmatrix}\otimes Q_1,
\end{align*}
where the $k/m$-order Taylor approximation of $y^{k/m}$ around $y_0=\mu$ defines the last columns, with Taylor coefficients $\bar{p}_{i+1}^{k/m}(y_0)=-{k/m \choose i}(-y_0)^{k/m-i}$.


To test accuracy of proposed approximation, we choose the same Markov chain generator matrix $Q_1$ as \citet[Example 1]{jarrow1997markov},
\begin{align*}
Q_1=\begin{bmatrix}
-0.11 & 0.1   & 0.01\\
0.05  & -0.15 & 0.1 \\
0     &0      & 0
\end{bmatrix}.
\end{align*}
Figure \ref{fig:orders1d} plots the mean absolute approximation error across all matrix entries against the order of the approximation $\ell=k/m$, for various parameters of the CIR process and maturities.
\begin{figure}
\resizebox{\hsize}{!}{\includegraphics*{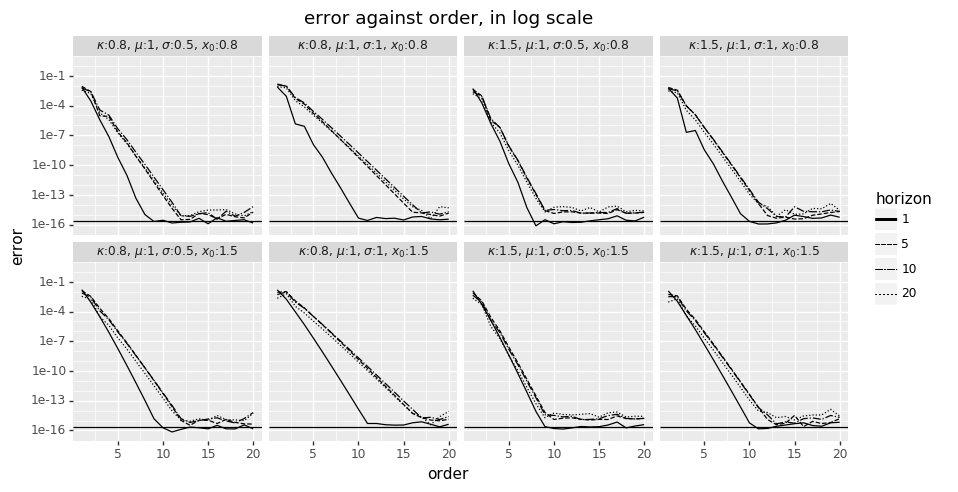}}
\caption{One dimension, migration probability element-wise mean absolute error against different orders, compare with analytical solution, in log scale. Different panels stand for different parameters. Different lines stand for different time horizon. Horizontal lines stand for machine precision.}
\label{fig:orders1d}
\end{figure}
We can see the error is decreasing exponentially as the order increases. Convergence appears to be faster for shorter horizons. All errors converge to one or two orders above machine precision.

\subsubsection{Migrations driven by a bivariate CIR process, commuting case}

\citet{hurd2007affine} apply the same credit model to the bivariate case i.e.~$n=2$. In order to ensure tractability, they specify $Q_2$ to reflect an additional default migration that is the same for all ratings. This second generator matrix can also be interpreted as a liquidity premium. The specific structure of this second matrix ensures that $Q_1$ and $Q_2$ commute. With the additional assumption that $K$ is diagonal, i.e.~the two driving CIR processes are independent, we can write
\begin{align*}
    P(t,y)=\mathbb{E}_y\left[e^{\int_0^t{(Y_{1,s}Q_1+Y_{2,s}Q_2)\ud s}}\right]
    =\mathbb{E}_{y_1}\left[e^{\int_0^t{Y_{1,s}Q_1\ud s}}\right]\mathbb{E}_{y_2}\left[e^{\int_0^t{Y_{2,s}Q_2\ud s}}\right],
\end{align*}
and apply the univariate pricing strategies.

We keep $Q_1$ the same as in the previous example, and follow \citet[Section 7]{hurd2007affine} to define $Q_2$, i.e.
\begin{align*}
Q_1&=\begin{bmatrix}
-0.11 & 0.1   & 0.01\\
0.05  & -0.15 & 0.1 \\
0     &0      & 0
\end{bmatrix},&
Q_2&=\begin{bmatrix}
-0.01 & 0     & 0.01 \\
0     & -0.01 & 0.01 \\
0     &0      & 0
\end{bmatrix}.
\end{align*}
For all other parameters, $K_{big}=\diag(1.5,1.5)$, $K_{small}=\diag(0.8,0.8)$, $\sigma_{big}=(1.0,1.0)$, $\sigma_{small}=(0.5,0.5)$, $x_{small}=(0.8,0.8)^\top$, and $x_{big}=(1.2,1.2)^\top$. The parenthesis lists the parameters of the two independent CIR processes. Figure \ref{fig:orders2d} shows the same pattern of approximation quality as in the univariate case.
\begin{figure}
\resizebox{\hsize}{!}{\includegraphics*{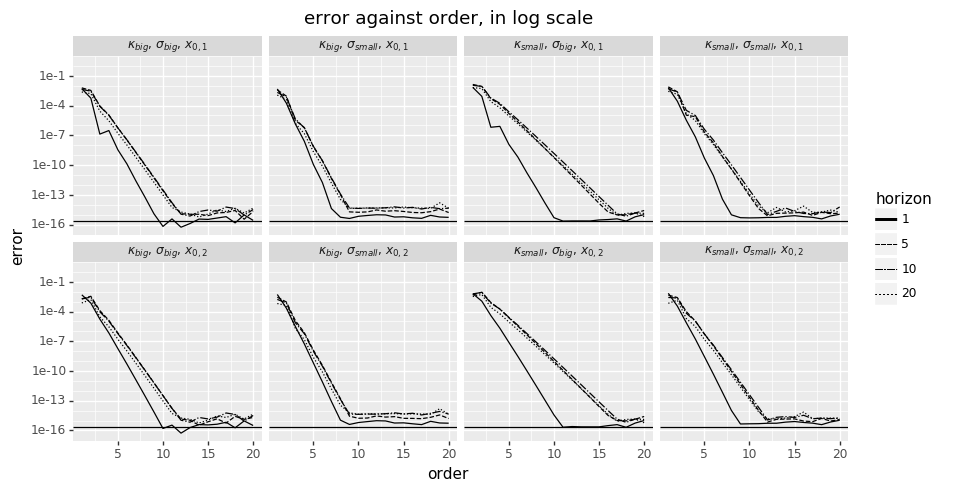}}
\caption{Two dimension when analytical solution exists, migration probability element-wise mean absolute error against different orders, compare with analytical solution, in log scale. Different panels stand for different parameters. Different lines stand for different time horizon. Horizontal lines stand for machine precision.}
\label{fig:orders2d}
\end{figure}

\subsubsection{Migrations driven by a bivariate CIR process, non-commuting case}

We consider a case for which no analytical solution exists, namely when both the independence and commutativity requirement fail. Let $Q_1$ and $Q_2$ be upper and lower-triangular respectively. This means that $Y_1$ and $Y_2$ represent the scaling processes that accelerate the speed of upgrades and downgrades separately. This model can capture an important stylized fact, that with the business cycle upgrades tend to slow down when downgrades speed up, and vice versa. The details of this model are discussed in an accompanying empirical paper \citep{ba2020integrated}. 

Since no analytical solution exists, we benchmark the approximation against a Monte Carlo simulation. We choose $Q_1$ and $Q_2$ to be upper and lower triangular decomposition of matrix in \citet[Example 1]{jarrow1997markov},
\begin{align*}
Q_1&=\begin{bmatrix}
-0.11 & 0.1   & 0.01\\
0     & -0.1  & 0.1 \\
0     & 0     & 0   \\
\end{bmatrix}, &
Q_2&=\begin{bmatrix}
0     & 0     & 0\\
0.05  & -0.05 & 0\\
0     &0      & 0
\end{bmatrix}.
\end{align*}
These matrices do not commute.\footnote{Using the \citet[Conjecture 1.2]{bottcher2005big} we get the Frobenius norm inequality $\|Q_1Q_2-Q_2Q_1\|_F\leq\sqrt{2}\|Q_1\|_F\|Q_2\|_F$. This inequality gives rise to a measure of non-commutativity for non-trivial matrices with values in $[0,1]$,
\begin{align*}
\frac{\|Q_1Q_2-Q_2Q_1\|_F}{\sqrt{2}\|Q_1\|_F\|Q_2\|_F}=0.48.
\end{align*}
This value allows us to conclude that these matrices are strongly non-commuting.}

All other parameters are  $K_{big}=\begin{bmatrix}1.5&0.4\\0.4&1.5\end{bmatrix}$, $K_{small}=\begin{bmatrix}0.8&0.2\\0.2&0.8\end{bmatrix}$, $\sigma_{big}=(1.0,1.0)$, $\sigma_{small}=(0.5,0.5)$,  $x_{small}=(0.8,0.8)^\top$, and $x_{big}=(1.2,1.2)^\top$. Note that the matrix $K$ is no longer diagonal, and can induce correlation between the upgrade and downgrade speed processes.

Figure \ref{fig:orders2db} shows the mean absolute approximation error against the approximation order $\ell=k/m$.
\begin{figure}
\resizebox{\hsize}{!}{\includegraphics*{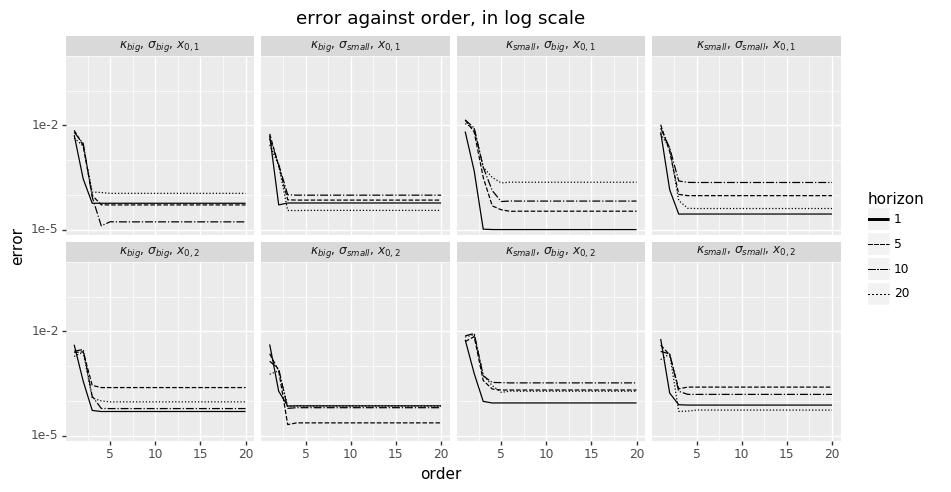}}
\caption{Two dimension when analytical solution does not exist, migration probability element-wise mean absolute error against different orders, compare with MC simulation, in log scale. Different panels stand for different parameters. Different lines stand for different time horizon.}
\label{fig:orders2db}
\end{figure}
It is worth noting that the errors converge to bounds within the margin of error expected from the Monte Carlo simulation. Overall, we see the same pattern of exponential decline in the error, although as expected a Monte Carlo-induced lower bound is hit sooner than when comparing against an analytical benchmark.


\appendix

\section{Appendix}\label{appendix}

The proof of Theorem \ref{thm:main} relies on two Lemmas that are presented below.

The first lemma is an inversion result \`a la Phragm\'en-Doetsch for Laplace transforms, see \citet{arendt1987vector} and Phragm\'en's approximation result for semigroups of operators, see~\citet{neubrander1987relation}. The general setting is that $(S_t)$ is a strongly continuous semigroup acting on a Banach space $\mathcal{P}$ with generator $\mathcal{A}$ having domain $\mathcal{D}(\mathcal{A})$. Let $R(\lambda,\mathcal{A})$ denote the resolvent, which exists for all sufficiently large $\lambda$.
We will use that there exist $C\geq 1$ and $w\geq 0$ be such that the operator norm 
\begin{equation}\label{eq:cw}
\|S_t\|\leq Ce^{wt}, 
\end{equation}
see \citet[7.14]{bobrowski2005functional}.

\begin{lemma}\label{lemma:pd}
Let $t>0$ and
\begin{equation}\label{eq:defs}
S_t(\lambda,\mathcal{A},f):=\lambda\sum_{n=1}^\infty (-1)^{n-1}\frac{1}{(n-1)!}e^{n\lambda t}R(n\lambda,\mathcal{A})f. 
\end{equation}
Then for some $w\geq 0$ and $C\geq 1$ it holds that for $\lambda>2w$ and all $f\in \mathcal{D}(\mathcal{A})$,
\[
\|S_tf- S_t(\lambda,\mathcal{A},f)\| \leq  C\frac{e^{wt}}{\lambda}\|f\|.
\]
\end{lemma}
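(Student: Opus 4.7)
The plan is to reduce the Phragm\'en series to a single integral against a probability density and then control the resulting error by combining the super-exponential tail decay of that density with semigroup continuity estimates.

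First, insert the Laplace representation $R(n\lambda,\mathcal{A})f=\int_0^\infty e^{-n\lambda s}S_sf\,ds$ (valid for $n\lambda>w$) into \eqref{eq:defs} and interchange sum and integral. For $\lambda>2w$ the interchange is justified by absolute convergence via $\|S_s\|\leq Ce^{ws}$ and the explicit identity
\begin{equation*}
\sum_{n=1}^{\infty}\frac{(-1)^{n-1}}{(n-1)!}e^{n\lambda(t-s)}=e^{\lambda(t-s)}\exp\bigl(-e^{\lambda(t-s)}\bigr).
\end{equation*}
After the change of variable $u=\lambda(s-t)$ one obtains the clean representation
\begin{equation*}
S_t(\lambda,\mathcal{A},f)=\int_{-\lambda t}^{\infty}\phi(u)\,S_{t+u/\lambda}f\,du,\qquad \phi(u):=e^{-u-e^{-u}},
\end{equation*}
where $\phi$ is the Gumbel density. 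The substitution $v=e^{-u}$ gives $\int_{\mathbb{R}}\phi=1$ and the sharp tail identity $\int_{-\infty}^{-\lambda t}\phi(u)\,du=e^{-e^{\lambda t}}$.

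Using $\int_{\mathbb{R}}\phi=1$, split the error as
\begin{equation*}
S_tf-S_t(\lambda,\mathcal{A},f)=\int_{-\infty}^{-\lambda t}\phi(u)\,S_tf\,du+\int_{-\lambda t}^{\infty}\phi(u)\,(S_t-S_{t+u/\lambda})f\,du.
\end{equation*}
The first (tail) term is bounded by $Ce^{wt}e^{-e^{\lambda t}}\|f\|$, which decays faster than any power of $1/\lambda$ and is absorbed into $\tfrac12 Ce^{wt}\|f\|/\lambda$ in the relevant regime. For the second term, use that on $\mathcal{D}(\mathcal{A})$ the semigroup is differentiable: $(S_{t+u/\lambda}-S_t)f=\int_0^{u/\lambda}S_{t+\tau}\mathcal{A}f\,d\tau$ (symmetrically for $u<0$ with $u\geq -\lambda t$), giving the pointwise estimate $\|(S_{t+u/\lambda}-S_t)f\|\leq Ce^{w(t+|u|/\lambda)}(|u|/\lambda)\|\mathcal{A}f\|$. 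Together with the uniform bound $\int_{\mathbb{R}}\phi(u)|u|e^{w|u|/\lambda}\,du\leq C'$ for $\lambda>2w$ (the super-exponential left tail of $\phi$ and the exponential right tail $\phi(u)\leq e^{-u}$ for $u\geq 0$ make this integral finite and uniform in $\lambda$), this yields a natural bound of order $C''e^{wt}\|\mathcal{A}f\|/\lambda$.

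The main obstacle is then upgrading this natural $\|\mathcal{A}f\|$-bound to the stated $\|f\|$-bound, since $\mathcal{A}$ is unbounded so $\|\mathcal{A}f\|$ cannot be controlled by $\|f\|$ directly. The key observation is that in the Phragm\'en series the generator $\mathcal{A}$ only ever appears paired with a resolvent: from $\mathcal{A}R(\mu,\mathcal{A})=\mu R(\mu,\mathcal{A})-I$ and the resolvent bound $\|\mu R(\mu,\mathcal{A})\|\leq C\mu/(\mu-w)$, one obtains $\|\mathcal{A}R(n\lambda,\mathcal{A})g\|\leq (2C+1)\|g\|$ uniformly for all $n\geq 1$ and all $g$ whenever $n\lambda>2w$. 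Re-deriving the error expansion by writing $\lambda R(n\lambda,\mathcal{A})f=\tfrac{1}{n}\bigl(f+\mathcal{A}R(n\lambda,\mathcal{A})f\bigr)$ and inserting this uniform bound at each occurrence of a resolvent--generator combination (rather than pulling $\mathcal{A}$ all the way onto $f$) replaces the problematic $\|\mathcal{A}f\|$ by $\|f\|$ at the cost of enlarging the constant. Combining with the tail estimate then delivers $\|S_tf-S_t(\lambda,\mathcal{A},f)\|\leq Ce^{wt}\|f\|/\lambda$ for a suitable $C\geq 1$ and all $\lambda>2w$. The delicate step is this reorganization of the series so that the resolvent always stays adjacent to $\mathcal{A}$, together with controlling the remaining alternating sum via its integral form.
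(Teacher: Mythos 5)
Your first two steps are correct and, modulo packaging, coincide with the paper's own argument. The Gumbel-kernel representation
\begin{equation*}
S_t(\lambda,\mathcal{A},f)=\int_{-\lambda t}^{\infty}\phi(u)\,S_{t+u/\lambda}f\,\mathrm{d}u,
\qquad \phi(u)=e^{-u-e^{-u}},
\end{equation*}
the tail identity $\int_{-\infty}^{-\lambda t}\phi(u)\,\mathrm{d}u=e^{-e^{\lambda t}}$, and the increment estimate via $(S_{t+u/\lambda}-S_t)f=\int S_{t+\tau}\mathcal{A}f\,\mathrm{d}\tau$ are all valid, and they deliver
\begin{equation*}
\|S_tf-S_t(\lambda,\mathcal{A},f)\|\;\leq\; Ce^{wt}e^{-e^{\lambda t}}\|f\|+C''\frac{e^{wt}}{\lambda}\,\|\mathcal{A}f\|.
\end{equation*}
The paper reaches exactly this bound by a different packaging: it cites the Phragm\'en--Doetsch inversion estimate for $\int_0^tS_u(\mathcal{A}f)\,\mathrm{d}u$ (whose right-hand side carries $\|\mathcal{A}f\|$, since the inversion is applied to the function $u\mapsto S_u\mathcal{A}f$), together with $R(n\lambda,\mathcal{A})\mathcal{A}f=n\lambda R(n\lambda,\mathcal{A})f-f$ and $S_tf=f+\int_0^tS_u\mathcal{A}f\,\mathrm{d}u$. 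Up to this point the two routes are equivalent.

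The genuine gap is your final ``upgrade'' from $\|\mathcal{A}f\|$ to $\|f\|$, and it cannot be repaired. Substituting $\lambda R(n\lambda,\mathcal{A})f=\tfrac1n\bigl(f+\mathcal{A}R(n\lambda,\mathcal{A})f\bigr)$ and then ``inserting the uniform bound $\|\mathcal{A}R(n\lambda,\mathcal{A})\|\leq 2C+1$ at each occurrence'' amounts to bounding the series term by term in norm; this destroys the alternating-sign cancellation on which everything rests, since $\sum_{n\geq1} e^{n\lambda t}/n!=e^{e^{\lambda t}}-1$, so termwise estimates can only produce a bound of order $e^{e^{\lambda t}}\|f\|$, not $\|f\|/\lambda$. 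If instead you preserve the cancellation by writing $\mathcal{A}R(\mu,\mathcal{A})f=\int_0^\infty\mu e^{-\mu s}(S_sf-f)\,\mathrm{d}s$ and interchanging sum and integral as before, the series collapses back to $\int_{-\lambda t}^\infty\phi(u)(S_{t+u/\lambda}f-f)\,\mathrm{d}u$, i.e.\ you return to the very expression whose only available control is through $\|\mathcal{A}f\|$; the reorganization is circular. In fact no argument can close this gap, because the $\|f\|$-version of the statement is false in general: $S_t(\lambda,\mathcal{A},\cdot)$ is a bounded operator and $\mathcal{D}(\mathcal{A})$ is dense, so the claimed bound is equivalent to the operator-norm estimate $\|S_t-S_t(\lambda,\mathcal{A},\cdot)\|\leq Ce^{wt}/\lambda$; but for the translation semigroup $S_tf(x)=f(x+t)$ on $L^2(\mathbb{R})$ (with $C=1$, $w=0$) your own computation identifies the error operator as the Fourier multiplier $\xi\mapsto e^{\mathrm{i}t\xi}\bigl(1-\Gamma(1-\mathrm{i}\xi/\lambda)\bigr)$ up to an $O(e^{-e^{\lambda t}})$ correction, $\Gamma(1-\mathrm{i}\beta)$ being the characteristic function of $\phi$, and since $|\Gamma(1-\mathrm{i}\beta)|^2=\pi\beta/\sinh(\pi\beta)\to0$ as $|\beta|\to\infty$, this multiplier has supremum norm close to $1$ for every $\lambda$. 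So the graph-norm bound from your first step is the true statement. Be aware that the paper's own proof has exactly the same limitation: its invoked inversion theorem yields $C\frac{e^{wt}}{\lambda-w}\|\mathcal{A}f\|$, and the final display of the paper silently replaces $\|\mathcal{A}f\|$ by $\|f\|$ without justification; your attempt has the merit of making explicit that this substitution is where the real difficulty lies.
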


\begin{proof}
Denote 
\[
\tilde{S}_t(\lambda,\mathcal{A},f)\coloneqq\sum_{n=1}^\infty(-1)^{n-1}\frac{1}{n!}e^{n\lambda t}R(n\lambda,\mathcal{A})f.
\]
A first result, adapted from~\citet{neubrander1987relation} and from Theorem~2.3.2 (Phragm\'en-Doetsch Inversion) in \citet{arendt1987vector} for Laplace transforms of functions, is that for $f\in\mathcal{P}$ and $\lambda>w$ one has 
\[
\left\|\int_0^t S_uf\ud u- 
\tilde{S}_t(\lambda,\mathcal{A},f)\right\|
\leq C\frac{e^{wt}}{\lambda-w}\|f\|,
\]
where $C$ and $w$ are as in \eqref{eq:cw}.
Following the line of thoughts as in  \citep{neubrander1987relation}, we have for $f\in\mathcal{D}(\mathcal{A})$ the equality $R(n\lambda,\mathcal{A})\mathcal{A}f=n\lambda R(n\lambda,\mathcal{A})-f$ and hence
\begin{align*}
\tilde{S}_t(\lambda,\mathcal{A},\mathcal{A}f) & = S_t(\lambda,\mathcal{A},f)- \sum_{n=1}^\infty(-1)^{n-1}\frac{1}{n!}e^{n\lambda t} f \\
 & = S_t(\lambda,\mathcal{A},f)+ \left(\exp{\{-e^{\lambda t}\}}-1\right)f.
\end{align*}
Using $S_tf=f+\int_0^tS_u\mathcal{A}f$ for $f\in\mathcal{D}(\mathcal{A})$, we develop for $\lambda>2w$
\begin{align*}
\|S_tf- S_t(\lambda,\mathcal{A},f)\| & \leq  \|f\|+ \left\|\int_0^t{S_u\mathcal{A}\ud u} - \tilde{S}_t(\lambda,\mathcal{A},f)\right\| +\|\tilde{S}_t(\lambda,\mathcal{A},f)-S_t(\lambda,\mathcal{A},f) \| \\
& \leq  C\frac{e^{wt}}{\lambda-w}\|f\|+ \exp{\{-e^{\lambda t}\}}\|f\| \\
& \leq \left(2C\frac{e^{wt}}{\lambda}+ \exp{\{-e^{\lambda t}\}}\right)\|f\| \\ 
& \leq C'\frac{e^{wt}}{\lambda}\|f\|,
\end{align*}
for some constant $C'$. 
\end{proof}
If the space $\mathcal{P}$ is the Hilbert space as in Sections~\ref{sec:notation} and \ref{sec:theory}, we can write the counterpart of Lemma~\ref{lemma:pd} for sequences in $\mathcal{H}$. If the $S_t$ form the strongly continuous transition semigroup on $\mathcal{P}$ of a Feller process, then they are all expectations and we can take $C=1$ and $w=0$ in \eqref{eq:cw}. The same is true for the induced semigroup of operators $\bar{S}_t$ on $\mathcal{H}$. This leads to the following variation on Lemma~\ref{lemma:pd}.

\begin{corollary}\label{corollary:pd}
Let $t>0$ and
\begin{equation}\label{eq:defs2}
S_t(\lambda,A,\bar{f}):=\lambda\sum_{n=1}^\infty (-1)^{n-1}\frac{1}{(n-1)!}e^{n\lambda t}R(n\lambda,A)\bar{f}. 
\end{equation}
Then for some $w\geq 0$ and $C\geq 1$ it holds that for $\lambda\geq 2w$ and all $\bar{f}\in \mathcal{D}(A)$,
\[
\|\bar{S}_t\bar{f}- S_t(\lambda,A,\bar{f})\| \leq  C\frac{e^{wt}}{\lambda}\|\bar{f}\|.
\]
Moreover, if the $S_t$ form the strongly continuous transition semigroup of a Feller process, then 
\[
\|\bar{S}_t\bar{f}- S_t(\lambda,A,\bar{f})\| \leq  \frac{C}{\lambda}\|\bar{f}\|.
\]
\end{corollary}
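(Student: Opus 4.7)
My plan is to obtain Corollary \ref{corollary:pd} as a direct transplantation of Lemma \ref{lemma:pd} from the function space $\mathcal{P}$ to its coordinate sequence space $\mathcal{H}$, leveraging the isometric identification built into the setup of Section \ref{sec:notation}. The basic observation is that, because the basis $b$ was chosen orthonormal, the map $\Phi:f\mapsto\bar{f}$ from $\mathcal{P}$ onto $\mathcal{H}$ is a unitary isomorphism with $\|f\|=\|\bar{f}\|$. By construction it intertwines the semigroups and the generators, $\bar{S}_t=\Phi S_t\Phi^{-1}$ and $A=\Phi\mathcal{A}\Phi^{-1}$, and consequently $\mathcal{D}(A)=\Phi(\mathcal{D}(\mathcal{A}))$ and $R(\lambda,A)=\Phi R(\lambda,\mathcal{A})\Phi^{-1}$ for every $\lambda$ in the resolvent set. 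These identifications are implicit in the discussion in Section \ref{sec:theory}, where the graph norms $\|f\|_{\mathcal{A}}$ and $\|\bar{f}\|_A$ are already shown to coincide.

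From there, the first step is to observe that the series $S_t(\lambda,A,\bar{f})$ defined in \eqref{eq:defs2} is the $\Phi$-image, termwise, of the series $S_t(\lambda,\mathcal{A},f)$ defined in \eqref{eq:defs}. Convergence of one is therefore equivalent to convergence of the other, with matching norms on the partial sums. The second step is to apply Lemma \ref{lemma:pd} to $f\in\mathcal{D}(\mathcal{A})$ and push the resulting inequality through $\Phi$:
\begin{equation*}
\|\bar{S}_t\bar{f}-S_t(\lambda,A,\bar{f})\|=\|S_tf-S_t(\lambda,\mathcal{A},f)\|\leq C\frac{e^{wt}}{\lambda}\|f\|=C\frac{e^{wt}}{\lambda}\|\bar{f}\|,
\end{equation*}
valid for all $\lambda\geq 2w$, which is the first assertion of the corollary.

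For the Feller refinement, the plan is simply to specialise the constants in \eqref{eq:cw}. When the $S_t$ come from a Feller process they act by expectation, so $\|\bar{S}_t\|\leq 1$ (exactly the fact the author already invokes in the proof of Theorem \ref{thm:main}), and we may therefore take $w=0$. Substituting this into the first inequality collapses the exponential factor and leaves $\|\bar{S}_t\bar{f}-S_t(\lambda,A,\bar{f})\|\leq \tfrac{C}{\lambda}\|\bar{f}\|$, as claimed.

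The only real delicacy I anticipate is the passage of $\mathcal{D}(\mathcal{A})$ and the resolvent through $\Phi$, i.e.\ checking that $\Phi$ really is compatible with the operator-theoretic objects beyond $\mathcal{A}$ itself. This is guaranteed by the equality of graph norms recorded in Section \ref{sec:theory}, so no new argument is required; convergence of the Phragm\'en series in $\mathcal{H}$ is then automatic once it has been established in $\mathcal{P}$ by Lemma \ref{lemma:pd}, and no independent estimates on $\mathcal{H}$ are needed.
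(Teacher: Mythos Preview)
Your proposal is correct and follows exactly the route the paper takes: the corollary is presented there not with a separate proof but as the direct transcription of Lemma~\ref{lemma:pd} to $\mathcal{H}$ via the isometric identification $f\leftrightarrow\bar{f}$, together with the observation that for a Feller transition semigroup one may take $w=0$ in \eqref{eq:cw}. Your write-up simply makes explicit the unitary intertwining $\Phi$ and the resulting passage of domains and resolvents, which the paper leaves implicit.
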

In the next lemma we specialize to the situation where the semigroup acts on elements of a Hilbert space. So we assume that $(S_t)$ be a Feller semigroup defined on a Hilbert space $\mathcal{H}$ with generator $A$. Let $P_k$ be projections of $\mathcal{H}$ on to $\mathcal{H}_k$ with norm $\|P_k\|=1$, typically orthogonal projections. Let $A_k$ be as in \eqref{def:ak}. Let $R(\lambda,A)$ and $R(\lambda,A_k)$ be the corresponding resolvents. For $f^k\in\mathcal{D}(A_k)$ we consider $R(\lambda,A_k)\bar{f}^k\in\mathcal{H}_k$ as an element of $\mathcal{H}$.

\begin{lemma}\label{lemma:sks}
In the setting just described, assume that the operators $A_k:\mathcal{H}_k\to \mathcal{H}_k$ are such that with $\bar{f}^k=P_k\bar{f}$, $\bar{f}\in\mathcal{D}(A)$ and $\bar{f}^k\in\mathcal{D}(A_k)$ satisfy $\lim_{k\to\infty} R(n\lambda,A_k)\bar{f}^k=R(n\lambda,A)\bar{f}$, for all $n\geq1$ and where the limit is taken in $\mathcal{H}$. 
Then $S_t(\lambda,A_k,\bar{f}^k)\to S_t(\lambda,A,\bar{f})$ for $k\to\infty$.\end{lemma}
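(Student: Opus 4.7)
The plan is to reduce convergence of the two series $S_t(\lambda,A_k,\bar{f}^k)$ and $S_t(\lambda,A,\bar{f})$ to their shared termwise behavior via a discrete dominated convergence argument (Tannery's theorem). Both series carry the same scalar weights $\lambda(-1)^{n-1}e^{n\lambda t}/(n-1)!$, and by hypothesis the vector parts satisfy $R(n\lambda,A_k)\bar{f}^k\to R(n\lambda,A)\bar{f}$ in $\mathcal{H}$ for each fixed $n\geq 1$. Hence it suffices to produce a $k$-uniform summable majorant for $\bigl(e^{n\lambda t}/(n-1)!\bigr)\|R(n\lambda,A_k)\bar{f}^k\|$.

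The crux of the argument is the uniform resolvent bound $\|R(n\lambda,A_k)\|\leq 1/(n\lambda)$. I would obtain it by showing that $A_k$ generates a contraction semigroup on the finite-dimensional Hilbert space $\mathcal{H}_k$. Because $\|P_k\|=1$, the projection $P_k$ is necessarily orthogonal on $\mathcal{H}$ (a standard fact for norm-one projections on Hilbert spaces), in particular self-adjoint, and acts as the identity on $\mathcal{H}_k$. Since $\bar{S}_t$ is a contraction (the $\bar{S}_t$ represent expectations and thus have norm one, as noted in the statement of Theorem \ref{thm:main}), its generator $A$ is dissipative, i.e.\ $\mathrm{Re}\langle A\bar{x},\bar{x}\rangle\leq 0$ for $\bar{x}\in\mathcal{D}(A)$. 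For $\bar{x}\in\mathcal{H}_k\cap\mathcal{D}(A)$ we then compute
\begin{align*}
\mathrm{Re}\langle A_k\bar{x},\bar{x}\rangle
=\mathrm{Re}\langle P_kA\bar{x},\bar{x}\rangle
=\mathrm{Re}\langle A\bar{x},P_k\bar{x}\rangle
=\mathrm{Re}\langle A\bar{x},\bar{x}\rangle\leq 0,
\end{align*}
so $A_k$ is dissipative on $\mathcal{H}_k$. Being a bounded operator on a finite-dimensional Hilbert space, dissipativity immediately yields the contraction estimate $\|e^{tA_k}\|\leq 1$, and from $\|(\lambda I-A_k)\bar{x}\|^2\geq\lambda^2\|\bar{x}\|^2$ we get $\|R(\lambda,A_k)\|\leq 1/\lambda$ for every $\lambda>0$, uniformly in $k$.

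With this bound in hand, and using $\|\bar{f}^k\|=\|P_k\bar{f}\|\leq\|\bar{f}\|$, the termwise estimate
\begin{align*}
\left\|\frac{e^{n\lambda t}}{(n-1)!}\,R(n\lambda,A_k)\bar{f}^k\right\|
\leq \frac{e^{n\lambda t}}{(n-1)!\,n\lambda}\,\|\bar{f}\|
\end{align*}
holds, and the right-hand side is summable in $n$ (indeed $\sum_{n\geq 1}e^{n\lambda t}/(n-1)!=e^{\lambda t}\exp(e^{\lambda t})<\infty$) and independent of $k$. Tannery's theorem then permits interchanging the limit $k\to\infty$ with the summation, and combined with the termwise convergence this yields $S_t(\lambda,A_k,\bar{f}^k)\to S_t(\lambda,A,\bar{f})$ in $\mathcal{H}$.

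The only nontrivial step is the uniform resolvent bound; the orthogonality of the $P_k$ forced by $\|P_k\|=1$ is precisely what makes the dissipativity argument transparent and sidesteps any delicate spectral-perturbation analysis. As a byproduct, the bound $\|\bar{S}_t^k\bar{f}^k\|\leq\|\bar{f}^k\|=\|P_k\bar{f}\|$ invoked to control the first and third terms of \eqref{eq:3things} in the proof of Theorem \ref{thm:main} is established en route, via the same contraction estimate $\|e^{tA_k}\|\leq 1$.
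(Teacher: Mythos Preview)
Your proof is correct and follows the same overall architecture as the paper's: establish a $k$-uniform bound $\|R(n\lambda,A_k)\bar{f}^k\|\leq c\,\|\bar{f}\|/(n\lambda)$ and then pass to the limit in the Phragm\'en series by dominated convergence (the paper phrases this as dominated convergence for Bochner integrals; your Tannery formulation is equivalent for series).

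The genuine difference lies in how the resolvent bound is obtained. The paper argues via the semigroup identity $e^{tA_k}P_k\bar{f}=P_kS_tP_k\bar{f}$, deduced from the power-series claim $(P_kAP_k)^j=P_kA^jP_k$, and then takes Laplace transforms to get $R(\lambda,A_k)P_k\bar{f}=P_kR(\lambda,A)P_k\bar{f}$, whence the bound follows from $\|R(\lambda,A)\|\leq 2C/\lambda$. Your route is more direct: you exploit $\|P_k\|=1$ to conclude $P_k$ is orthogonal (hence self-adjoint), transfer dissipativity of $A$ to $A_k$ by the one-line computation $\mathrm{Re}\langle A_k\bar{x},\bar{x}\rangle=\mathrm{Re}\langle A\bar{x},\bar{x}\rangle\leq 0$, and read off $\|R(\lambda,A_k)\|\leq 1/\lambda$ immediately. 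This is cleaner and, importantly, avoids the power-series identity $(P_kAP_k)^j=P_kA^jP_k$, which in general fails unless $P_k$ commutes with $A$; your dissipativity argument needs no such commutation. The byproduct $\|e^{tA_k}\|\leq 1$ that you note also supplies the estimate $\|\bar{S}_t^k\bar{f}^k\|\leq\|\bar{f}\|$ used in the proof of Theorem~\ref{thm:main} without further work.
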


\begin{proof}
Recall from \eqref{eq:cw} that there exist $C,w>0$ such that $\|S_t\|\leq Ce^{wt}$, from which it follows that $\|R(\lambda,A)\|\leq \frac{C}{\lambda -w}$, which is at most equal to $\frac{2C}{\lambda}$ for all $\lambda \geq2w$. It follows that then $\| R(n\lambda,A)\bar{f}\|\leq \frac{2C}{n\lambda}\|\bar{f}\|$ whenever $n\lambda\geq 2w$. Since the $A_k$ are obtained from $A$ by the projections, we have also have $\| R(n\lambda,A_k)\bar{f}^k\|\leq \frac{2C}{n\lambda}\|\bar{f}^k\|\leq \frac{2C}{n\lambda}\|\bar{f}\|$, since $\|\bar{f}^k\|\leq \|\bar{f}\|$.

Let $\{S^k_t\}_{t\geq0}$ be the semigroup generated by $A^k$ acting on sequences $\bar{f}^k= P_k\bar{f}$. One easily verifies that $P_kA_k=A_kP_k$ and then $(P_kAP_k)^j=P_kA^jP_k$ which then leads to $\exp{\{tA_k\}}P_kf=S^k_tP_kf=P_kS^k_tP_kf=P_kS_tP_kf$. By the representation of resolvents as Laplace transforms, one obtains $R(\lambda,A_k)P_kf=P_kR(\lambda,A)P_kf$ and then $\| R(n\lambda,A_k)\bar{f}^k\|\leq \frac{2C}{n\lambda}\|\bar{f}^k\|\leq \frac{2C}{n\lambda}\|\bar{f}\|$, since $\|\bar{f}^k\|\leq \|\bar{f}\|$.

Consider the norm of the summands in $S_t(\lambda,A_k,\bar{f}_k)$. For each $n$ this norm is at most equal to
\[
\lambda\frac{e^{n\lambda t}}{(n-1)!}\|R(n\lambda,A_k)\bar{f}^k\|\leq \lambda\frac{e^{n\lambda t}}{(n-1)!}\frac{2C}{n\lambda}\|\bar{f}\|=\frac{e^{n\lambda t}}{n!}\frac{2C}{\lambda}\|\bar{f}\|,
\]
which has a finite sum over $n\geq 1$. Hence, considering the infinite sum 
\begin{equation}\label{eq:sum2}
S_t(\lambda,A_k,\bar{f}_k)=\lambda\sum_{n=1}^\infty (-1)^{n-1}\frac{1}{(n-1)!}e^{n\lambda t}R(n\lambda,A_k)\bar{f}^k 
\end{equation}
as a Bochner integral, we can apply dominated convergence for Bochner integrals (see~\cite[Proposition~1.2.5]{hytonen2016analysis}), to \eqref{eq:sum2} to arrive at the convergence
\[
\lambda\sum_{n=1}^\infty(-1)^{n-1}\frac{e^{n\lambda t}}{(n-1)!}R(n\lambda,A_k)\bar{f}^k \to \lambda\sum_{n=1}^\infty(-1)^{n-1}\frac{e^{n\lambda t}}{(n-1)!}R(n\lambda,A)\bar{f},
\]
which was our aim.
\end{proof}

\paragraph{Acknowledgements}
We express our gratitude to Richard Martin, Andrew Ang and Yury Krongauz for valuable input on our paper, including the suggested application to Gram-Charlier series.

%
%

\bibliographystyle{plainnat}
\bibliography{literature}   

\end{document}